\documentclass{article}

\usepackage[preprint]{neurips_2026}

\usepackage[utf8]{inputenc} 
\usepackage[T1]{fontenc}    
\usepackage{hyperref}       
\usepackage{url}            
\usepackage{booktabs}       
\usepackage{amsfonts}       
\usepackage{nicefrac}       
\usepackage{microtype}      
\usepackage{xcolor}         
\usepackage{graphicx}
\usepackage{bm}
\usepackage{threeparttable}
\usepackage{sidecap}
\sidecaptionvpos{figure}{c}
\usepackage{tikz}
\usetikzlibrary{decorations.pathreplacing}
\usepackage{quantikz}
\usepackage{amsthm}
\usepackage{amsmath}

\newtheorem{theorem}{Theorem}[section]
\newtheorem{corollary}{Corollary}[theorem]
\newtheorem{lemma}[theorem]{Lemma}
\newtheorem{proposition}[theorem]{Proposition}

\theoremstyle{definition}
\newtheorem{definition}{Definition}[section]
\newtheorem{remark}{Remark}[section]

\title{Architecture Shape Governs QNN Trainability: 
Jacobian Null Space Growth and Parameter Efficiency}

\author{%
  \normalfont
  Michael Poppel$^{1,2}$ \quad
  David Bucher$^{2}$ \quad
  Maximilian Zorn$^{1}$ \quad
  Markus Baumann$^{1}$ \quad
  Sebastian W\"{o}lckert$^{1}$ \\
  Claudia Linnhoff-Popien$^{1}$ \quad
  Philipp Altmann$^{1}$ \quad
  Jonas Stein$^{1}$ \\[0.5em]
  $^{1}$Department of Computer Science, LMU Munich, Germany \\
  $^{2}$Aqarios GmbH, Munich, Germany \\[0.3em]
  \texttt{michael.poppel@aqarios.com}
}

\begin{document}

\maketitle

\begin{abstract}
Variational quantum circuits with angle encoding implement truncated
Fourier series, and architectures arranging $N$ qubits with $L$
encoding layers each --- sharing encoding budget $E = NL$ ---
generate identical frequency spectra, identical frequency redundancy,
and require the same minimum parameter count for coefficient control.
Despite this equivalence, trainability varies substantially
with architecture shape $(N,L)$ at fixed $E$.
We identify structural rank deficiency of the coefficient matching
Jacobian $J$ as the mechanism responsible.
For serial single-qubit architectures, we prove $\mathrm{rank}(J)
\leq 2L+1$ regardless of parameter count $P$, with $\dim(\ker J) \geq P-(2L+1)$ growing without bound --- a
phenomenon we term \emph{structural gradient starvation}: a
growing fraction of parameters become structurally decoupled
from the loss as $P$ increases at fixed $L$.
Parallel architectures avoid this via independent phase
trajectories, ensuring $\sigma_{\min}(J^{(\mathrm{par})}) > 0$
generically for $P \leq 2E+1$, so no parameter lies in $\ker J$.
For practitioners, we further show that the two natural routes to
increasing parameter count have fundamentally different effects:
adding feature map (FM) layers monotonically strengthens the
Jacobian QFIM eigenvalue spectrum and achieves $R^2 \geq 0.95$
with $1.6$--$2.2\times$ fewer parameters than adding trainable
blocks across all tested architectures, while trainable blocks
improve training only through the classical interpolation
mechanism with no quantum-specific benefit.
\end{abstract}

\section{Introduction}

Variational quantum circuits (VQCs) with angle encoding implement
truncated Fourier series~\citep{schuld2021effect}, and are universal
function approximators even for a single qubit~\citep{Perez_Salinas_2021}.
Arranging $N$ qubits with $L$ encoding layers each, the accessible
frequency spectrum depends only on the total encoding budget $E = NL$,
not on how it is distributed across $(N,L)$ at fixed
$E$~\citep{holzer2024spectral}.
Under unary encoding, frequency redundancy --- the spectral bias
toward lower frequencies~\citep{Rahaman_2019} arising from
combinatorial degeneracy in the encoding layers --- is likewise
governed entirely by $E$~\citep{duffy2026spectralbiasvariationalquantum}.
Both the accessible function class and the spectral bias mechanism
are therefore architecture-independent.
\citet{holzer2024spectral} observe that larger $N$ tends to yield
more favorable optimization behavior, and note that a large
frequency spectrum can make accurate training harder even when
theoretical expressivity is sufficient.
Figure~\ref{fig:intro} makes this precise: at fixed $E$ --- where
all architectures share identical spectra and redundancy structure
--- low-qubit ($N=1,2$) architectures fail across the full
parameter range while intermediate architectures ($N=3,4$) succeed
with far fewer parameters.
Since single-qubit circuits are universal function approximators
in the limit~\citep{Perez_Salinas_2021}, expressivity alone cannot
explain this failure.
\begin{figure}[h]
  \centering
  \includegraphics[width=\linewidth]{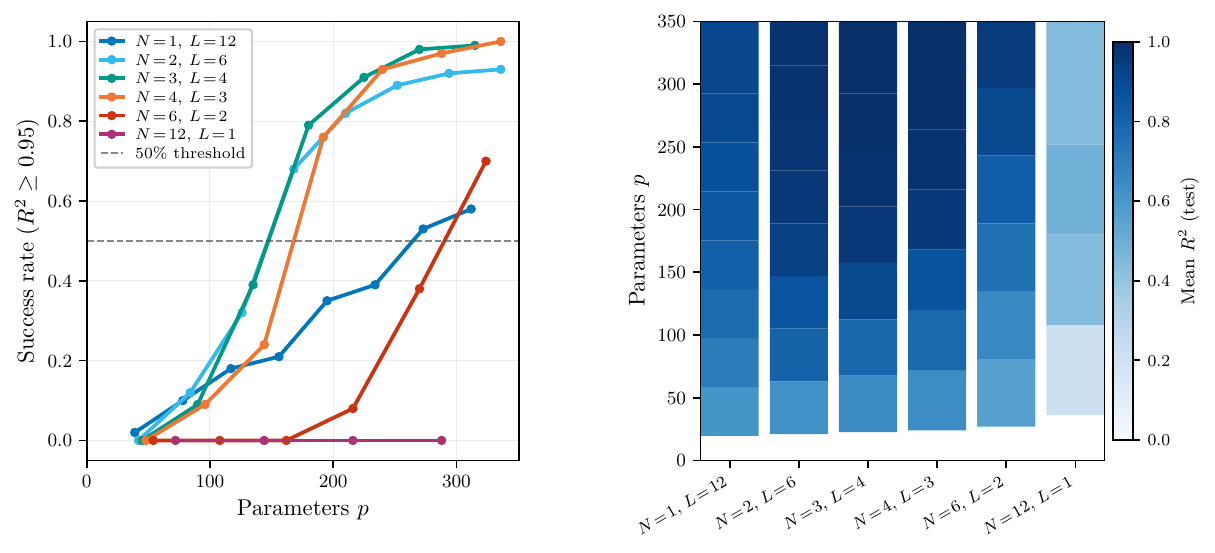}
  \caption{
  Trainability of data re-uploading QNNs at fixed encoding budget
  $E=NL=12$, trained with Adam on 10 randomly drawn degree-12
  Fourier targets.
  \textbf{Left:} success rate ($R^2 \geq 0.95$) vs.\ parameter
  count $P$.
  \textbf{Right:} mean $R^2$ (test) vs.\ $P$.
  All architectures share identical frequency spectra and redundancy
  structure, ruling out expressivity and spectral bias as
  explanations.
  }
  \label{fig:intro}
\end{figure}
We identify \textbf{structural rank deficiency of the Jacobian $J$}
as the mechanism responsible.
For serial ($N=1$) circuits, phase-locking forces all $P$ gradient
directions to share a common global phase, giving
$\mathrm{rank}(J) \leq 2L+1$ regardless of $P$ --- a property
we term \emph{structural gradient starvation} (distinct from
gradient suppression in deep learning~\citep{glorot2010understanding,
chen2021exploring}): the fraction of parameters receiving identically
zero gradient signal grows as $(P-\mathrm{rank}(J))/P \to 1$.
Parallel architectures escape this via independent phase trajectories
per qubit, maintaining $\sigma_{\min}(J^{(\mathrm{par})}) > 0$
generically for $P \leq 2E+1$; the advantage is structural rather
than threshold-based since both architectures share the same null-space
onset $P > 2E+1$ at fixed $E$.
We make two contributions
(Propositions~\ref{prop:rank}--\ref{prop:ill_cond},
Corollary~\ref{cor:parallel}, Section~\ref{sec:theory}):
a proof of the rank ceiling and null-space growth for serial circuits,
and a proof that parallel architectures avoid phase-locking via
bilinear factorization.
We further show that adding FM layers achieves $R^2 \geq 0.95$
with $1.6$--$2.2\times$ fewer parameters than adding trainable blocks, \emph{while}
trainable blocks improve training only through the classical
interpolation mechanism (Section~\ref{sec:op}).

\section{Theoretical Background}
\label{sec:background}

\subsection{Variational Quantum Circuits with Angle Encoding}
\label{sec:vqc}

\begin{definition}[VQC with Angle Encoding]
\label{def:vqc}
An $L$-layer VQC with angle encoding consists of feature map (FM)
operations $S(x) = e^{ixH}$, where $H = \frac{1}{2}\sigma$ for a Pauli
operator $\sigma \in \{\sigma_x, \sigma_y, \sigma_z\}$, interleaved
with parametrized ansatz layers $W_i(\boldsymbol{\theta}_i)$.
The circuit unitary and scalar output are
\begin{equation}
U(x,\boldsymbol{\theta}) =
W_L(\boldsymbol{\theta}_L)S(x)\cdots S(x)W_0(\boldsymbol{\theta}_0),
\qquad
f_{\boldsymbol{\theta}}(x) =
\langle 0 | U^{\dagger}(x,\boldsymbol{\theta})\,M\,
U(x,\boldsymbol{\theta}) | 0 \rangle,
\end{equation}
where $M$ is a fixed observable and
$\boldsymbol{\theta} = (\boldsymbol{\theta}_0, \ldots,
\boldsymbol{\theta}_L)$ are optimized by minimizing a training loss
$\mathcal{L}(\boldsymbol{\theta})$.
Note that an $L$-layer VQC contains $L+1$ trainable ansatz blocks
$W_0, \ldots, W_L$, sandwiching $L$ encoding layers $S(x)$;
parameter counts throughout this paper reflect this convention.
\end{definition}

\subsection{Fourier Representation, Spectral Invariance, and Spectral Bias}
\label{sec:fourier}

\begin{theorem}[Fourier Representation~\citep{schuld2021effect}]
\label{thm:fourier}
The output of a VQC with $L$ FM layers is a truncated Fourier series:
\begin{equation}
f_{\boldsymbol{\theta}}(x) =
\sum_{\omega \in \Omega} c_{\omega}(\boldsymbol{\theta})\,e^{i\omega x},
\end{equation}
with parameter-dependent coefficients $c_\omega(\boldsymbol{\theta})
\in \mathbb{C}$.
Under \emph{unary encoding} --- where each FM layer applies
$S(x) = e^{ix\sigma/2}$ with unit scaling --- the spectrum is
$\Omega = \{-L, \ldots, L\}$ with $|\Omega| = 2L+1$.
\end{theorem}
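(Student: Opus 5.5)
We diagonalize the encoding generator and expand the circuit output in the computational basis of that eigenbasis, reading off the frequencies as differences of eigenvalue sums. Since $H=\tfrac12\sigma$ for a Pauli $\sigma$, there is a fixed unitary $V$ with $H = V^\dagger \tfrac12\sigma_z V$. Hence $S(x)=V^\dagger\,\mathrm{diag}(e^{ix/2},e^{-ix/2})\,V$. The $V$ and $V^\dagger$ factors can be absorbed into the neighbouring ansatz blocks by redefining $W_i \mapsto V W_i V^\dagger$, with $W_0\mapsto V W_0$ and $W_L\mapsto W_L V^\dagger$. This changes only the parametrization of the coefficients, not the set of frequencies. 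We may therefore assume without loss of generality that $S(x)=\mathrm{diag}(e^{i\lambda_1 x},e^{i\lambda_2 x})$ with eigenvalues $\lambda_{1,2}=\pm\tfrac12$.

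Next we insert resolutions of the identity in this eigenbasis between every pair of factors of $U(x,\boldsymbol\theta)$. Writing multi-indices $\mathbf j=(j_1,\dots,j_L)\in\{1,2\}^L$, each component of the state is
\begin{equation}
[U(x,\boldsymbol\theta)|0\rangle]_i=\sum_{\mathbf j} e^{i x\Lambda_{\mathbf j}}\,(W_L)_{i j_L}(W_{L-1})_{j_L j_{L-1}}\cdots(W_0)_{j_1 0},
\qquad \Lambda_{\mathbf j}=\sum_{\ell=1}^{L}\lambda_{j_\ell}.
\end{equation}
The expectation value $f_{\boldsymbol\theta}(x)=\sum_{i,i'}\overline{[U|0\rangle]_i}\,M_{ii'}\,[U|0\rangle]_{i'}$ is then a double sum over $\mathbf j,\mathbf k$. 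Its terms have the form $e^{ix(\Lambda_{\mathbf k}-\Lambda_{\mathbf j})}$ times a product of entries of the $W_i$, their conjugates, and $M$, none of which depend on $x$. Grouping terms with equal frequency $\omega=\Lambda_{\mathbf k}-\Lambda_{\mathbf j}$ defines the coefficients
\begin{equation}
c_\omega(\boldsymbol\theta)=\sum_{\Lambda_{\mathbf k}-\Lambda_{\mathbf j}=\omega}(\cdots),
\end{equation}
which gives the truncated Fourier series with $\Omega=\{\Lambda_{\mathbf k}-\Lambda_{\mathbf j}\}$. Since $M$ is Hermitian, $f$ is real and $c_{-\omega}=\overline{c_\omega}$.

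It remains to identify the spectrum under unary encoding. Each $\Lambda_{\mathbf j}$ equals $\tfrac12(n_1-n_2)$, where $n_1+n_2=L$ count the indices equal to $1$ and $2$. So $\Lambda_{\mathbf j}$ ranges over $\{-L/2,-L/2+1,\dots,L/2\}$, and every value in this set is attained. Differences of two such values are exactly the integers $\{-L,\dots,L\}$, and all of them occur: for instance $\mathbf j$ constant at $2$ and $\mathbf k$ containing $m$ indices equal to $1$ gives $\omega=m$. Hence $|\Omega|=2L+1$.

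The only step needing care is the basis-change absorption, because it must be compatible with how the blocks sandwich the encodings. The rest is bookkeeping. Note that the statement asserts only that $\Omega$ is the set of frequencies that can appear. Individual coefficients may vanish for particular $\boldsymbol\theta$, and we do not need to rule this out.
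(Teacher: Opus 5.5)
Your proof is correct. It is essentially the standard argument of \citet{schuld2021effect}, which the paper cites rather than reproving: diagonalize the encoding generator, absorb the basis change into the neighbouring ansatz blocks, and read off the frequencies as differences $\Lambda_{\mathbf k}-\Lambda_{\mathbf j}$ of eigenvalue sums. This is also the same diagonalize-and-expand step the paper uses in Steps~1--2 of its proof of Proposition~\ref{prop:bilinear}.
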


\begin{corollary}[Parameter Requirement~\citep{schuld2021effect}]
\label{cor:param_scaling}
Independent control of all Fourier coefficients requires
$P \geq 2L + 1$ ansatz parameters.
\end{corollary}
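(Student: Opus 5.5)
The corollary asks how many ansatz parameters are needed for independent control of every Fourier coefficient. I would read ``independent control'' as local surjectivity of the coefficient map. Theorem~\ref{thm:fourier} lets us regard the model output as a point in a finite-dimensional real vector space. Since $M$ is Hermitian, $f_{\boldsymbol\theta}$ is real-valued, so $c_{-\omega}=\overline{c_\omega}$. The coefficients $\{c_\omega\}_{\omega\in\Omega}$ are therefore determined by $c_0\in\mathbb{R}$ together with $c_1,\dots,c_L\in\mathbb{C}$. This gives exactly $1+2L=2L+1$ real degrees of freedom, and we identify the coefficient vector with a point of $\mathbb{R}^{2L+1}$. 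Independent control then means that the map
\[
\Phi:\mathbb{R}^P\to\mathbb{R}^{2L+1},\qquad \boldsymbol\theta\mapsto \bigl(c_0,\mathrm{Re}\,c_1,\mathrm{Im}\,c_1,\dots,\mathrm{Re}\,c_L,\mathrm{Im}\,c_L\bigr)(\boldsymbol\theta),
\]
can move the coefficients in every direction near some parameter point. Equivalently, its Jacobian $J=\partial\Phi/\partial\boldsymbol\theta$ has full row rank $2L+1$ there, or, in the weaker formulation, the image of $\Phi$ contains an open set.

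The key steps, in order, are as follows.
\begin{enumerate}
\item Establish smoothness. Each $c_\omega(\boldsymbol\theta)$ is a polynomial in the entries of the gate matrices $W_i(\boldsymbol\theta_i)$. Those entries are trigonometric polynomials in $\boldsymbol\theta$, so $\Phi$ is real-analytic and in particular $C^1$.
\item Apply the rank bound. $J$ is a $(2L+1)\times P$ matrix, so $\mathrm{rank}(J)\le\min(P,2L+1)$. Full row rank $2L+1$ is therefore impossible when $P<2L+1$.
\item Cover the weaker, image-based reading. Sard's theorem handles this case: if $P<2L+1$, every point of $\mathbb{R}^P$ is a critical point of $\Phi$. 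The image $\Phi(\mathbb{R}^P)$ then has Lebesgue measure zero in $\mathbb{R}^{2L+1}$. It contains no open set, so some coefficient configurations are unreachable in every neighbourhood.
\end{enumerate}
Either reading yields $P\ge 2L+1$.

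The main obstacle is interpretive rather than technical. One has to pin down ``independent control'' precisely enough that the claim is actually a theorem. One also has to make sure the real-dimension count is exactly $2L+1$ and not the naive $2(2L+1)$ obtained from complex coefficients; the Hermiticity argument settles this. A secondary point, if one wants a strict measure-zero statement, is that Sard's theorem for maps $\mathbb{R}^P\to\mathbb{R}^{m}$ with $P<m$ needs only $C^1$ regularity, which step~1 provides.

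Sufficiency of $2L+1$ parameters is not claimed by the corollary, so I would not attempt it. The statement is purely a necessary condition.
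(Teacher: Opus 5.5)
Your proof is correct and follows the paper's own (implicit) justification. The paper gives no separate proof of this corollary and cites \citet{schuld2021effect}; it gets the $2L+1$ real degrees of freedom from $c_{-\omega}=\overline{c_\omega}$ (Definition~\ref{def:jacobian}) and reads off $\mathrm{rank}(J)\le\min(P,2L+1)$ from $J\in\mathbb{R}^{(2L+1)\times P}$, which are your steps~1--2. Your Sard step for the image-based reading, with its explicit $C^1$ regularity requirement, is a valid strengthening that the paper does not spell out.
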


\begin{theorem}[Spectral Invariance~\citep{holzer2024spectral}]
\label{thm:spectral_invariance}
For $N$ qubits with $L$ FM layers each, the spectrum depends only on
the total encoding budget $E = NL$: serial ($N=1$, $L=E$), parallel
($N=E$, $L=1$), and hybrid architectures with entangling ansatz layers
all generate identical frequency spectra $\Omega = \{-E, \ldots, E\}$.
Any trainability difference across architectures at fixed $E$ is
therefore a property of the ansatz structure, not of the function class.
\end{theorem}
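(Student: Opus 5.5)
The plan is to reduce each architecture to a product of single-qubit encoding factors and count frequencies through eigenvalue differences of the total encoding generator. First I would note that every encoding gate has the form $e^{ix\sigma/2}$ with $\sigma$ a Pauli operator. Since every Pauli operator is unitarily equivalent to $\sigma_z$, I would absorb the change of basis into the adjacent trainable blocks $W_i$. This leaves each encoding gate diagonal with eigenvalues $\pm\tfrac12$. In the general $N$-qubit, $L$-layer circuit, the $N$ gates in one encoding layer act on distinct qubits and therefore commute. Their product is $e^{ixH_{\mathrm{layer}}}$, where $H_{\mathrm{layer}} = \tfrac12\sum_{q=1}^N \sigma_z^{(q)}$ (after basis absorption), with eigenvalues in $\{-N/2,\dots,N/2\}$.

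Second, I would expand $U(x,\boldsymbol\theta)$ in the eigenbasis of each layer generator, following Theorem~\ref{thm:fourier}. Writing $f_{\boldsymbol\theta}(x)=\langle 0|U^\dagger M U|0\rangle$ and inserting resolutions of the identity, each term picks up a phase $e^{ix(\Lambda_{\mathbf k}-\Lambda_{\mathbf j})}$. Here $\Lambda_{\mathbf k}=\sum_{\ell=1}^L\lambda_{k_\ell}$ is a sum of one eigenvalue from each layer. The accessible frequencies are therefore
\[
\Omega=\{\Lambda_{\mathbf k}-\Lambda_{\mathbf j}\}.
\]
Each $\Lambda_{\mathbf k}$ is a sum of $NL=E$ terms $\pm\tfrac12$, so it ranges over $\{-E/2,-E/2+1,\dots,E/2\}$. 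Differences then give exactly $\{-E,\dots,E\}$. Every value is attained: $\Lambda$ can take all $E+1$ consecutive values, and the extremes $\pm E/2$ give $\pm E$. This count depends only on $E$, not on the split $(N,L)$. It recovers the serial case ($N=1$, $L=E$; Theorem~\ref{thm:fourier}) and the parallel case ($N=E$, $L=1$) as special instances.

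Third, for hybrid architectures with entangling ansatz layers, I would observe that the argument above never uses the structure of the $W_i$. Entanglers are part of the trainable blocks and only affect the coefficients $c_\omega(\boldsymbol\theta)$. The frequency set is determined solely by the spectrum of the encoding generators, so it is identical across all three architecture classes. The concluding sentence of the theorem then follows: the model class $\{\sum_{\omega\in\Omega}c_\omega e^{i\omega x}\}$ is fixed by $E$, so any observed trainability difference must come from how the coefficient map $\boldsymbol\theta\mapsto c(\boldsymbol\theta)$ is parametrized.

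The main obstacle is the interpretation of ``identical spectra.'' The argument establishes that the \emph{accessible} frequency set is the same for all architectures, not that every frequency has a nonzero coefficient for a given $\boldsymbol\theta$. I would state this explicitly and, if needed, exhibit one parameter choice per architecture with $c_{\pm E}\neq 0$ to show the set is tight. A natural choice is $W_i$ equal to Hadamard-type rotations, with $M=\sigma_x^{\otimes N}$ or an analogous product observable. A secondary subtlety is the basis-absorption step when different qubits use different Pauli axes: the conjugating unitaries must be local, so they can be merged into the neighbouring ansatz blocks without changing the architecture class.
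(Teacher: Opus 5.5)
Your argument is correct and is the standard eigenvalue-difference argument behind this result. The paper gives no proof of its own (it imports the theorem from \citet{holzer2024spectral}), but its proof of Proposition~\ref{prop:bilinear} in Appendix~\ref{app:bilinear} carries out exactly your diagonalize-and-take-differences computation for $L=1$, and your sum of $E$ eigenvalues $\pm\tfrac12$ extends it to general $(N,L)$.

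One fix to the optional tightness witness. In your $\sigma_z$-encoding frame, take all blocks equal to $H^{\otimes N}$ and $M=\sigma_x^{\otimes N}$. Then the final block turns $M$ into the diagonal $\sigma_z^{\otimes N}$, so $\langle 1\cdots1|W_L^\dagger M W_L|0\cdots0\rangle=0$ and $c_{\pm E}=0$; for $L=1$ the output is even constant in $x$. With $M=\sigma_z^{\otimes N}$ and the same Hadamard blocks you do get $c_{\pm E}\neq0$.

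More generally, whether $\pm E$ is actually realized depends on $M$ and on the entanglers. For example, a product ansatz with the paper's local $M=Z_{N-1}$ only reaches $\{-L,\dots,L\}$. So your observation that the argument ``never uses the $W_i$'' establishes the containment $\Omega\subseteq\{-E,\dots,E\}$, not tightness.
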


Under unary encoding, each frequency $\omega$ can be generated by
$\binom{2E}{E-\omega}$ distinct FM combinations, a
degeneracy that grows with $E$ and introduces a spectral bias
toward lower frequencies~\citep{duffy2026spectralbiasvariationalquantum},
analogous to that in deep learning~\citep{Rahaman_2019,Belis_2026}.
This redundancy depends only on the total encoding budget $E$,
and is therefore identical across architectures at fixed $E$.

\subsection{Quantum Fisher Information Matrix}
\label{sec:dla_qfim}

\begin{definition}[Quantum Fisher Information Matrix~\citep{stokes2020quantum}]
\label{def:qfim}
For $|\psi(\boldsymbol{\theta})\rangle = U(\boldsymbol{\theta})|0\rangle$,
the QFIM $\mathcal{F}(\boldsymbol{\theta}) \in \mathbb{R}^{P\times P}$
has entries
\begin{equation}
\mathcal{F}_{jk} =
4\,\mathrm{Re}\!\left[
\langle \partial_j \psi | \partial_k \psi \rangle
- \langle \partial_j \psi | \psi \rangle
\langle \psi | \partial_k \psi \rangle
\right],
\end{equation}
with rank equal to the number of independently explorable directions
in state space at $\boldsymbol{\theta}$.
\end{definition}

The maximum achievable QFIM rank is bounded above by the real
dimension of the pure state manifold:
\begin{equation}
  R_{\max} := \max_{\boldsymbol{\theta}}\,
  \mathrm{rank}(\mathcal{F}(\boldsymbol{\theta}))
  \leq 2^{N+1}-2,
\end{equation}
independently of $P$~\citep{larocca2023theory}.
A VQC with $P \ll R_{\max}$ parameters operates in the
underparameterized regime where the optimization landscape
may contain spurious local minima; $P \geq R_{\max}$ is a
necessary condition for landscape tractability in
theory~\citep{larocca2023theory}, though empirically circuits
often train successfully with fewer parameters.
We invoke this bound in Appendix~\ref{app:degree_justification}
to explain why the $N=6$ minimum configuration ($P=36$,
$R_{\max}=126$) starts well below saturation.

Our primary diagnostic --- the Jacobian QFIM
$\hat{\mathcal{F}} = \mathbf{J}_{\mathrm{data}}^\top
\mathbf{J}_{\mathrm{data}} / n$
(Definition~\ref{def:jacobian} and
Section~\ref{sec:coeff_matching}) --- measures output
sensitivity at discrete training points and has different
rank bounds from $\mathcal{F}(\boldsymbol{\theta})$;
it is the appropriate diagnostic for the function
approximation setting studied here.

\subsection{The Coefficient Matching Problem}
\label{sec:coeff_matching}

By Theorem~\ref{thm:fourier}, training a VQC reduces to finding
$\boldsymbol{\theta}^*$ such that $c_\omega(\boldsymbol{\theta}^*)
= c^*_\omega$ for all $\omega \in \Omega$ --- the
\emph{coefficient matching problem}.
We assume the expressivity condition
$\Omega_{\mathrm{target}} \subseteq \Omega_{\mathrm{model}}$
is satisfied throughout.

\begin{definition}[Coefficient Matching Jacobian]
\label{def:jacobian}
Although $c_\omega(\boldsymbol{\theta}) \in \mathbb{C}$
(Theorem~\ref{thm:fourier}), the circuit output
$f_{\boldsymbol{\theta}}(x) \in \mathbb{R}$ implies conjugate
symmetry $c_{-\omega} = \overline{c_{\omega}}$, leaving
$|\Omega| = 2L+1$ real degrees of freedom.
We work with the \emph{real representative} obtained by stacking
independent coefficients:
\begin{equation}
  \tilde{\mathbf{c}}(\boldsymbol{\theta})
  = \bigl(\mathrm{Re}\,c_0,\;
    \mathrm{Re}\,c_1,\;\mathrm{Im}\,c_1,\;\ldots,\;
    \mathrm{Re}\,c_L,\;\mathrm{Im}\,c_L
  \bigr)^\top \in \mathbb{R}^{2L+1}.
\end{equation}
The \emph{coefficient matching Jacobian} is then
\begin{equation}
J \in \mathbb{R}^{|\Omega| \times P},
\qquad
J_{k,j} =
\frac{\partial \tilde{c}_k(\boldsymbol{\theta})}
     {\partial \theta_j},
\end{equation}
with rows indexed by $k \in \{1,\ldots,2L+1\}$ corresponding
to the real representatives above.
Its rank governs the number of Fourier directions accessible
to the optimizer: parameters in $\ker J$ are structurally
decoupled from the loss and receive identically zero gradient signal.
\end{definition}

The \emph{data Jacobian}
$\mathbf{J}_{\mathrm{data}} \in \mathbb{R}^{n \times P}$,
with entries
$(\mathbf{J}_{\mathrm{data}})_{i,j} =
 \partial f_{\boldsymbol{\theta}}(x_i)/\partial\theta_j$
evaluated at $n$ training points $\{x_i\}_{i=1}^n$,
measures sensitivity of the scalar circuit output over the
training distribution.
The \emph{Jacobian QFIM}
$\hat{\mathcal{F}} = \mathbf{J}_{\mathrm{data}}^\top
\mathbf{J}_{\mathrm{data}} / n \in \mathbb{R}^{P \times P}$
is distinct from the coefficient Jacobian $J$: $J$ governs
the geometry of coefficient space and is the object studied
in Sections~\ref{sec:theory}--\ref{sec:op}, while
$\hat{\mathcal{F}}$ measures output sensitivity at discrete
training points and serves as the empirical diagnostic
in Section~\ref{sec:op}.
When $\mathrm{rank}(J) < P$, parameters in $\ker J$ produce no
change in any Fourier coefficient and receive identically zero
gradient signal regardless of initialization or training
trajectory~\citep{nocedal2006numerical,trefethen1997numerical,
bjorck1996numerical} --- a structural phenomenon distinct from barren
plateaus~\citep{mcclean2018barren}.

By Theorem~\ref{thm:spectral_invariance}, serial and parallel
architectures at fixed $E$ share the same $|\Omega|$ and hence
the same dimensions of $J$; any difference in $\dim(\ker J)$
is purely structural, arising from how parameters couple to the
Fourier coefficients --- the subject of Section~\ref{sec:theory}.

\section{Structural Gradient Starvation in Serial Architectures}
\label{sec:theory}

We now identify the structural property that distinguishes
serial from parallel architectures at fixed encoding budget
$E = NL$, characterizing it through a concrete comparison of
the coefficient matching equations before proving the resulting
rank and null-space bounds.

\subsection{Structural Comparison: Serial vs.\ Parallel}
\label{subsec:concrete}

We illustrate the structural difference at the smallest
non-trivial encoding budget $E=2$, where both architectures
share spectrum $\Omega=\{-2,-1,0,1,2\}$.

\paragraph{Serial ($N=1$, $L=2$).}
With three SU(2) parameter matrices $W^{(1)},W^{(2)},W^{(3)}$,
writing $W^{(l)} = \bigl(\begin{smallmatrix}\alpha_l & -\bar\beta_l \\
\beta_l & \bar\alpha_l\end{smallmatrix}\bigr)$ for $l=1,2,3$,
the highest-frequency coefficient is:
\begin{equation}
  c_{+2}^{\mathrm{ser}} = -2\,\alpha_1\,\alpha_2^2\,\bar\beta_1\,
  \beta_3\bar\alpha_3.
  \label{eq:ser_c2}
\end{equation}
All three matrices are coupled simultaneously; the interior
matrix $W^{(2)}$ appears as $\alpha_2^2$, re-entangling each
gradient update with the current parameter value.

\paragraph{Parallel ($N=2$, $L=1$).}
With $\mathbf{v}=W^{(1)}|0\rangle\in\mathbb{C}^4$ and
$\widetilde{M}=(W^{(2)})^\dagger M\,W^{(2)}$:
\begin{equation}
  c_{+2}^{\mathrm{par}} = v_1\bar v_4\cdot\widetilde{M}_{4,1}.
  \label{eq:par_c2}
\end{equation}
Every summand is bilinear: the $W^{(1)}$-factor and the
$W^{(2)}$-factor are completely separated.
This pattern holds for all frequencies and generalizes to
arbitrary $N$ (Proposition~\ref{prop:bilinear}).

\paragraph{The general pattern.}
For general $L$ and $N$, these structural differences persist: 
serial coefficients are degree-$2L+2$ polynomials in the ansatz 
parameters with all matrices coupled, while parallel coefficients 
remain bilinear with $W^{(1)}$ and $W^{(2)}$ fully separated for 
all frequencies and all $N$ (Proposition~\ref{prop:bilinear}, 
Appendix~\ref{app:bilinear}; see Appendix~\ref{app:bilinear_fig} 
for a circuit diagram illustration).

\begin{proposition}[Single-qubit gradient rank]
  \label{prop:rank}
  For a single-qubit PQC, at any fixed $(x,\boldsymbol{\theta})$,
  $\nabla_{\boldsymbol{\theta}} f(x;\boldsymbol{\theta})
  \in\mathbb{R}^P$ lies in a subspace of dimension at most $2$,
  and $\mathrm{rank}(\mathcal{F}(\boldsymbol{\theta}))\leq 2$.
  Both bounds are tight for generic $M$ and $|\psi\rangle$.
\end{proposition}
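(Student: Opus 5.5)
The plan is to reduce both claims to the geometry of a single qubit's pure state space, the Bloch sphere, which is a two-dimensional real manifold.

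Fix $x$ and write $|\psi(\boldsymbol{\theta})\rangle = U(x,\boldsymbol{\theta})|0\rangle$, with Bloch vector $\mathbf{r}(\boldsymbol{\theta}) \in S^2$, so that $|\psi\rangle\langle\psi| = \tfrac12(I + \mathbf{r}\cdot\boldsymbol{\sigma})$. Decompose the observable as $M = m_0 I + \mathbf{m}\cdot\boldsymbol{\sigma}$. Then
\[
f(x;\boldsymbol{\theta}) = m_0 + \mathbf{m}\cdot\mathbf{r}(\boldsymbol{\theta}),
\qquad
\nabla_{\boldsymbol{\theta}} f = D\mathbf{r}(\boldsymbol{\theta})^\top \mathbf{m},
\]
where $D\mathbf{r}\in\mathbb{R}^{3\times P}$ is the Jacobian of the map $\boldsymbol{\theta}\mapsto\mathbf{r}$. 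Differentiating $\|\mathbf{r}\|^2=1$ gives $\mathbf{r}^\top D\mathbf{r}=0$. Hence every column of $D\mathbf{r}$ lies in the tangent plane $T_{\mathbf{r}}S^2 = \mathbf{r}^\perp$, and $\mathrm{rank}(D\mathbf{r})\le 2$. Writing $D\mathbf{r} = B A$, with $B\in\mathbb{R}^{3\times 2}$ an orthonormal basis of $\mathbf{r}^\perp$ and $A\in\mathbb{R}^{2\times P}$, we get $\nabla f = A^\top (B^\top\mathbf{m})$. This lies in the row space of $A$, which has dimension at most $2$. It is worth stressing that this subspace, $\mathrm{range}(D\mathbf{r}^\top)$, is independent of $M$. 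The first bound follows.

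For the QFIM, I would use the standard identity that for pure states $\mathcal{F}_{jk} = 4\,\mathrm{Re}[\cdots]$ equals the Bloch-sphere metric pulled back by $\mathbf{r}$. Concretely, with $\rho = \tfrac12(I+\mathbf{r}\cdot\boldsymbol\sigma)$,
\[
\mathcal{F}_{jk} = 2\,\mathrm{Tr}(\partial_j\rho\,\partial_k\rho) = \partial_j\mathbf{r}\cdot\partial_k\mathbf{r}.
\]
To check this, expand $\partial_j\rho = \partial_j|\psi\rangle\langle\psi| + |\psi\rangle\partial_j\langle\psi|$ and use $\mathrm{Re}\langle\psi|\partial_j\psi\rangle=0$. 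The gauge-invariant combination in Definition~\ref{def:qfim} then appears directly. It follows that $\mathcal{F} = D\mathbf{r}^\top D\mathbf{r}$, so $\mathrm{rank}\,\mathcal{F} = \mathrm{rank}\,D\mathbf{r} \le 2$. The only computational care needed is this identity and its factor-of-$4$ normalization. The normalization does not affect rank, so the argument is robust to it.

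For tightness, I take a generic $|\psi\rangle$, meaning a parameter point at which $\boldsymbol{\theta}\mapsto\mathbf{r}$ is a submersion onto $S^2$. A standard example is an ansatz containing $R_Z R_Y$ (Euler-angle) rotations at a point away from the coordinate singularities $\mathbf{r}=\pm\hat z$. At such a point $\mathrm{rank}\,D\mathbf{r}=2$, so $\mathrm{rank}\,\mathcal{F}=2$. For the gradient, $\nabla f = A^\top B^\top\mathbf{m}$ with $A$ of full row rank $2$. As $M$ varies, $B^\top\mathbf{m}$ ranges over all of $\mathbb{R}^2$ because $B^\top$ is onto, so the gradients span the full $2$-dimensional subspace. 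For a single generic $M$, the gradient is nonzero exactly when $\mathbf{m}\not\parallel\mathbf{r}$, which excludes only a measure-zero set.

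The main obstacle is interpreting ``tight'' for the gradient. A single vector spans only one dimension, so tightness must refer to the ambient subspace $\mathrm{range}(D\mathbf{r}^\top)$. That subspace is the one the gradients occupy across all observables, or equivalently across different $x$ when the sphere point moves. I would state this explicitly, and note that surjectivity of $B^\top$ onto $\mathbb{R}^2$ is what makes the bound $2$ attained. A secondary point is that tightness needs a sufficiently expressive ansatz, with at least $2$ independent rotation axes, so that $D\mathbf{r}$ reaches rank $2$; I would add this as the genericity assumption.
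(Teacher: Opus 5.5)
Your proof is correct. It rests on the same geometric fact as the paper's: the single-qubit pure-state manifold has a two-dimensional real tangent space. The difference is that you work in real Bloch coordinates rather than in Hilbert space.

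The paper writes $|\partial_j\psi\rangle=g_j|\psi^\perp\rangle+\lambda_j|\psi\rangle$ with $\lambda_j\in i\mathbb{R}$. It obtains $\partial_j f=2\,\mathrm{Re}[\bar g_j\mu]$ with the shared scalar $\mu=\langle\psi^\perp|M|\psi\rangle$, and $\mathcal{F}_{jk}=4\,\mathrm{Re}[\bar g_j g_k]$, so both objects lie in $\mathrm{span}\{\mathrm{Re}\,g,\mathrm{Im}\,g\}$. The two formulations correspond exactly. In the orthonormal frame of $\mathbf{r}^\perp$ fixed by the phase of $|\psi^\perp\rangle$, the columns of your $A$ are $2(\mathrm{Re}\,g_j,\mathrm{Im}\,g_j)$ and $B^\top\mathbf{m}=(\mathrm{Re}\,\mu,\mathrm{Im}\,\mu)$; consistently, $|\mu|^2=|\mathbf{m}|^2-(\mathbf{m}\cdot\mathbf{r})^2$. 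So your $\mathrm{range}(D\mathbf{r}^\top)$ is precisely the paper's span.

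The paper's complex form feeds directly into Lemma~\ref{lemma:serial_coupling} and Proposition~\ref{prop:ill_cond}, which are phrased in terms of $g_j$. Your form is gauge-free and gives a sharp tightness criterion: $\mathrm{rank}\,\mathcal{F}=\mathrm{rank}\,D\mathbf{r}$, and this equals $2$ exactly when $\boldsymbol{\theta}\mapsto\mathbf{r}$ is a submersion at the point, i.e.\ when $g$ is not a complex multiple of a real vector.

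This repairs the paper's tightness step, which has two weaknesses:
\begin{itemize}
\item It infers independence of $\mathrm{Re}\,g$ and $\mathrm{Im}\,g$ from $g\notin\mathbb{R}^P$, which is insufficient. For example, $g=e^{i\alpha}h$ with $0\neq h\in\mathbb{R}^P$ and $\alpha\notin\pi\mathbb{Z}$ is non-real, yet gives rank $1$.
\item It places genericity in $M$ and $|\psi\rangle$ rather than in the derivative of the parametrization.
\end{itemize}
Your caveats are the right ones: $P\geq 2$, at least two independent rotation axes, and ``tight'' referring to the ambient subspace rather than to the single gradient vector.

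Finally, rephasing $|\psi^\perp\rangle$ only rotates $(\mathrm{Re}\,g,\mathrm{Im}\,g)$ within their span. Your formulation therefore also makes plain that, wherever the bound is tight, no gauge choice can make all $g_j$ simultaneously real.
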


\begin{proof}[Proof sketch]
For $N=1$, $|\psi\rangle$ lives on $\mathbb{CP}^1\cong S^2$, real
dimension 2.
Its tangent space is spanned by a single complex vector
$|\psi^\perp\rangle$, so $|\partial_j\psi\rangle
= g_j|\psi^\perp\rangle + \lambda_j|\psi\rangle$.
The gradient component reduces to $2\,\mathrm{Re}[\bar g_j\mu]$
where $\mu:=\langle\psi^\perp|M|\psi\rangle\in\mathbb{C}$ is
\emph{shared across all $j$}.
Hence $\nabla_{\boldsymbol{\theta}} f
\in\mathrm{span}\{\mathrm{Re}(g),\mathrm{Im}(g)\}$,
dimension $\leq 2$.
Tightness holds outside a measure-zero set of parameters,
as shown in Appendix~\ref{app:prop1}.
\end{proof}

\subsection{Serial Phase-Locking}
\label{lemma:coupling_main}

Proposition~\ref{prop:rank} gives a 2D bound at each fixed $x$.
It is \emph{a priori} possible that the subspace
$\mathcal{V}(x,\boldsymbol{\theta})=
\mathrm{span}\{\mathrm{Re}(g(x)),\mathrm{Im}(g(x))\}$
rotates as $x$ varies and spans a larger subspace in $J$.
The serial product structure prevents this.

\begin{lemma}[Serial phase-locking]
  \label{lemma:serial_coupling}
  For a serial $N=1$ re-uploading circuit:
  \begin{equation}
    g_j(x) = h_j(x)\cdot G(x,\boldsymbol{\theta}),
    \label{eq:factored}
  \end{equation}
  where $G(x,\boldsymbol{\theta})\in\mathbb{C}$ is a global phase
  factor common to all parameters and $h_j(x)\in\mathbb{C}$
  depends only on the sub-circuit below layer $j$.
  The subspace $\mathcal{V}(x,\boldsymbol{\theta})$ rotates by
  the \emph{same phase for every parameter} as $x$ varies.
\end{lemma}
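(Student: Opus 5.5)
The plan is to write the serial circuit as a product and differentiate it parameter by parameter. Then I express each tangent component $g_j(x)=\langle\psi^\perp|\partial_j\psi\rangle$ from the proof sketch of Proposition~\ref{prop:rank} through a common "upper" factor.

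\textbf{Step 1 (product-rule decomposition).} Write $U(x,\boldsymbol{\theta}) = U_{>j}(x)\,W_{l(j)}\,U_{<j}(x)$, where $l(j)$ is the ansatz block containing $\theta_j$. Assuming the standard Pauli-rotation parametrization, $\partial_j W_{l(j)} = -\tfrac{i}{2}\,W_{l(j)}^{(\geq j)} P_j W_{l(j)}^{(<j)}$. This gives
$$|\partial_j\psi\rangle = -\tfrac{i}{2}\,U_{\geq j}(x)\,P_j\,|\phi_j(x)\rangle, \qquad |\phi_j(x)\rangle := U_{<j}(x)|0\rangle,$$
so that $g_j(x) = -\tfrac{i}{2}\langle \phi_j^\perp(x)|P_j|\phi_j(x)\rangle\cdot\langle\psi^\perp|U_{\geq j}|\phi_j^\perp\rangle$, up to the component along $|\phi_j\rangle$, which maps into the $|\psi\rangle$ direction and is absorbed into $\lambda_j$. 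Here $|\phi_j^\perp\rangle$ is the orthogonal complement of $|\phi_j\rangle$ in $\mathbb{C}^2$. This step uses only $N=1$, since in $\mathbb{C}^2$ the complement is one-dimensional.

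\textbf{Step 2 (isolating the global factor).} Since $U_{\geq j}$ is unitary, it maps $|\phi_j\rangle\mapsto|\psi\rangle$ and hence maps $|\phi_j^\perp\rangle$ to $|\psi^\perp\rangle$ up to a phase $e^{i\chi_j(x)}$. For an SU(2) map this phase is fixed by the relative phase conventions of $|\phi_j^\perp\rangle$ and $|\psi^\perp\rangle$. I will fix the conventions canonically by setting $|\phi^\perp\rangle := \epsilon\,\overline{|\phi\rangle}$ with $\epsilon = i\sigma_y$. This choice is SU(2)-covariant: $V\epsilon\bar\phi = \epsilon\,\overline{V\phi}$ for $V\in \mathrm{SU}(2)$. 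With this choice, $\langle\psi^\perp|U_{\geq j}|\phi_j^\perp\rangle$ becomes a phase that does not depend on $j$, i.e.\ it equals $1$. Any remaining freedom in the choice of $|\psi^\perp\rangle$, together with the overall $U(1)$ from the encoding gates $e^{ix\sigma/2}$ (which lie in SU(2)), is then collected into a single factor $G(x,\boldsymbol{\theta})$ multiplying every $g_j$. Setting $h_j(x) := -\tfrac{i}{2}\langle\phi_j^\perp|P_j|\phi_j\rangle$ gives \eqref{eq:factored}, and by construction $h_j$ depends only on $U_{<j}$, the sub-circuit below layer $j$.

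\textbf{Step 3 (rotation statement).} From \eqref{eq:factored}, $\mathrm{Re}\,g(x)$ and $\mathrm{Im}\,g(x)$ are obtained from $\mathrm{Re}\,h(x)$ and $\mathrm{Im}\,h(x)$ by the same $2\times 2$ real rotation by $\arg G(x,\boldsymbol{\theta})$ (and scaling by $|G|$), applied uniformly to all $j$. This is the claimed common-phase rotation of $\mathcal{V}(x,\boldsymbol{\theta})$.

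\textbf{Main obstacle.} The hard part is Step 2: making the $j$-independence of the phase rigorous. A naive choice of orthogonal complements introduces $j$-dependent gauge phases that would destroy the factorization. The SU(2) covariance of $\phi\mapsto\epsilon\bar\phi$ is the key identity, and I would verify it first. A second caveat: $h_j$ still depends on $x$ through the encoding layers inside $U_{<j}$. The lemma's claim only concerns the shared factor $G$, so I will state clearly that this is the sense in which "depends only on the sub-circuit below layer $j$" is meant.
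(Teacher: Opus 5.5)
Your proposal is correct and follows essentially the paper's route. You split the circuit at the gate carrying $\theta_j$, fix the gauge $|\phi^\perp\rangle = \epsilon\overline{|\phi\rangle}$ (the paper's $T = i\sigma_y K$), and use the $\mathrm{SU}(2)$-covariance $V\epsilon\bar\phi = \epsilon\overline{V\phi}$ to show that the upper sub-circuit maps $|\phi_j^\perp\rangle$ exactly onto $|\psi^\perp\rangle$, so that $G=1$ in this gauge and $h_j$ depends only on the lower sub-circuit. You are also right to keep $h_j\in\mathbb{C}$: the paper's appendix additionally asserts $h_j\in\mathbb{R}$, but that does not follow from this argument (for $P_j=\sigma_z$ and $|\phi_j\rangle=(a,b)^\top$ one gets $h_j=-i\,ab$), so do not import that stronger claim.
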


\begin{proof}[Proof sketch]
For each layer $j$, $D_j^\dagger|\psi^\perp\rangle = 
e^{i\varphi}|\hat\phi_j^\perp\rangle$ where the phase $\varphi$ 
is independent of $j$.
We define $|\psi^\perp\rangle := T|\psi\rangle$ where
$T = i\sigma_y K$ is the anti-unitary time-reversal operator,
giving a canonically defined orthogonal complement that depends
only on $|\psi\rangle$ and is shared across all $j$.
With this gauge, $\varphi$ depends only on the global state
$|\psi\rangle$, establishing the phase independence.
This yields $g_j = h_j \cdot e^{i\varphi}$, with $h_j$ a 
trigonometric polynomial of degree $\leq L$ determined by 
the sub-circuit up to and including layer $j$.
Full proof in Appendix~\ref{app:lemma}.
\end{proof}

Figure~\ref{fig:phase_locking} provides direct empirical
confirmation: gradient trajectories $g_j(x)$ from different
ansatz blocks trace curves of identical shape in the complex
plane for serial circuits ($\mathrm{Im}(g_j/g_k) = 0.000$
exactly), while parameters on different qubits in parallel circuits 
trace genuinely distinct trajectories 
($|\mathrm{Im}(g_j/g_k)| = 0.19$ on average),
as established in Corollary~\ref{cor:parallel} below.

\subsection{Structural Gradient Starvation in Serial
\texorpdfstring{$N=1$}{N=1} Circuits}
\label{prop:prop2_main}

\begin{proposition}[Structural gradient starvation in serial circuits]
  \label{prop:ill_cond}
  Fix $L$ and let $P$ grow.
  For a serial single-qubit architecture:
  \begin{enumerate}
    \item[(i)] $\mathrm{rank}(J) = 2L+1$ generically,
      so the rank ceiling imposed by the $2L+1$ rows of $J$
      is tight and cannot be improved by adding parameters.
    \item[(ii)] For $P > 2L+1$, $\dim(\ker J) \geq P-(2L+1)
      \to \infty$ as $P \to \infty$ at fixed $L$: the fraction
      of parameters receiving zero gradient signal,
      $(P-\mathrm{rank}(J))/P \to 1$, grows monotonically
      with $P$.
  \end{enumerate}
\end{proposition}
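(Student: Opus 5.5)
The plan is to treat (i) and (ii) separately: (ii) is pure linear algebra once the row count of $J$ is fixed, and (i) reduces to exhibiting one parameter point of full rank and invoking analyticity. By Definition~\ref{def:jacobian}, $J\in\mathbb{R}^{(2L+1)\times P}$ for every $P$, since the serial spectrum is $\Omega=\{-L,\dots,L\}$ (Theorem~\ref{thm:fourier}) and conjugate symmetry leaves $2L+1$ real coordinates. Hence $\mathrm{rank}(J)\le 2L+1$ trivially, independent of $P$. Rank--nullity then gives
\[
\dim(\ker J)=P-\mathrm{rank}(J)\ge P-(2L+1)
\]
for $P>2L+1$. The ratio $(P-\mathrm{rank}(J))/P\ge 1-(2L+1)/P$ tends to $1$ as $P\to\infty$, and the lower bound is monotone increasing in $P$. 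This settles (ii) once (i) is in hand. Generically the rank equals $2L+1$, so the bound becomes an equality and the fraction itself, not just its lower bound, increases monotonically.

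For (i), the entries of $J$ are partial derivatives of $\tilde{\mathbf c}(\boldsymbol\theta)$, which is a real-analytic (indeed trigonometric-polynomial) function of the rotation angles. Every $(2L+1)\times(2L+1)$ minor of $J$ is therefore real-analytic on the connected parameter torus. If a single minor is nonzero at one point $\boldsymbol\theta_0$, it vanishes only on a measure-zero set, and $\mathrm{rank}(J)=2L+1$ generically. The step therefore reduces to exhibiting one point at which $J$ has full row rank. Adding parameters only appends columns, so full rank at $P=P_0$ implies full rank for all larger $P$ that contain the $P_0$ parametrization as a sub-family. It thus suffices to treat the minimal case where each $W^{(l)}$ is a general SU(2) element.

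The main obstacle is constructing this witness point, i.e.\ showing that the map $\boldsymbol\theta\mapsto\tilde{\mathbf c}$ is locally submersive somewhere. I would argue by induction on $L$ using the layered structure that also underlies Lemma~\ref{lemma:serial_coupling}. Write the output as $\langle\phi_{L}(x)|\,\widetilde M\,|\phi_L(x)\rangle$, where $\widetilde M=W_L^\dagger M W_L$ and $|\phi_L(x)\rangle=S(x)W_{L-1}\cdots|0\rangle$ has components that are trigonometric polynomials of degree $\le L/2$ in half-frequencies.

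The inductive step would go as follows.
\begin{itemize}
\item Varying the last block $W_L$ over SU(2) rotates $\widetilde M$ on the Bloch sphere. This produces two independent directions acting on the top-frequency pair $(\mathrm{Re}\,c_L,\mathrm{Im}\,c_L)$, because $c_L$ is linear in the off-diagonal element of $\widetilde M$; compare equation~\eqref{eq:ser_c2}, where $c_{\pm L}$ depends on $W_L$ through $\beta_L\bar\alpha_L$.
\item Varying the earlier blocks at a point where the top-frequency contribution is non-degenerate should reproduce, by the inductive hypothesis, full rank $2L-1$ on the lower coefficients $(c_0,\dots,c_{L-1})$.
\end{itemize}
This yields a block-triangular Jacobian with nonsingular diagonal blocks.

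If the induction becomes messy, my fallback is to pick an explicit point, e.g.\ all $W^{(l)}$ equal to a fixed generic rotation such as $R_y(\pi/2)$, and verify nonsingularity of the resulting $(2L+1)\times(2L+1)$ minor directly. That explicit check is the step I expect to require the most care.
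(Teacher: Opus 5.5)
Your skeleton matches the paper's proof. You read $\mathrm{rank}(J)\le 2L+1$ off the row count, get (ii) by rank--nullity, and get genericity in (i) from real-analyticity of a $(2L+1)$-minor. The paper uses $\det(JJ^\top)$, and its phase-locking Steps~1--2 play no role in the bound. For the decisive fact that this function is not identically zero, the paper only says ``the circuit is expressive''. You are therefore right to isolate a witness point as the crux. You are also right to restrict to families containing general $\mathrm{SU}(2)$ blocks: for $R_z$ rotations acting directly on $|0\rangle$, $J\equiv 0$.

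However, your witness argument has a gap: the block-triangularity is asserted, not derived. Work in the eigenbasis $\{|e_+\rangle,|e_-\rangle\}$ of the encoding generator $\sigma$. There $c_L=\widetilde M_{-+}\,K$, where $K=\overline{\langle e_-|W_0|0\rangle}\,\langle e_+|W_0|0\rangle\prod_{l=1}^{L-1}\langle e_+|W_l|e_+\rangle^2$ depends only on the earlier blocks. At a point where the top-frequency term is nonzero, $\partial c_L/\partial(W_0,\dots,W_{L-1})=\widetilde M_{-+}\,\partial K$ and $\partial(c_0,\dots,c_{L-1})/\partial W_L$ are both generically nonzero, so neither off-diagonal block vanishes. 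Moreover, the lower coefficients of $f_L$ are then not the Fourier coefficients of any $(L-1)$-layer circuit, so the inductive hypothesis has nothing to apply to.

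The missing idea is to evaluate on a special slice. Choose $W_L^*$ with $W_L^{*\dagger}MW_L^*=\pm\sigma$; this commutes with $S(x)$.
\begin{itemize}
\item For \emph{all} $W_0,\dots,W_{L-1}$, the output then equals that of the $(L-1)$-layer circuit with last block $W_{L-1}$ and observable $\pm\sigma$. So run the induction over arbitrary Pauli observables, with base case $L=0$.
\item The lower-right block is exactly the $(L-1)$-layer Jacobian.
\item The upper-right block vanishes, because $c_{\pm L}\equiv 0$ on the slice.
\item The upper-left block is $K\,d\widetilde M_{-+}$, which has real rank $2$. Moving $W_L$ along the two Paulis $\tau$ anticommuting with $\sigma$ gives $\delta\widetilde M\propto[\tau,\sigma]$, whose $(-,+)$ entries are $\mathbb{R}$-independent (like $1$ and $i$).
\end{itemize}
Now take the earlier blocks in the intersection of the open dense set $\{K\neq 0\}$ with the inductive genericity set. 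This gives $J=\bigl(\begin{smallmatrix}A&0\\ C&D\end{smallmatrix}\bigr)$ with $A$ and $D$ of full row rank, hence $\mathrm{rank}(J)=2L+1$. Note that on this slice $c_{\pm L}$ itself vanishes; only its $W_L$-derivative is non-degenerate.

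Do this intrinsically on $\mathrm{SU}(2)^{L+1}$ and pull back through the Euler-angle chart only at the end. For example, with $\sigma=\sigma_z$ and $M=Z$, every admissible $W_L^*$ is diagonal or anti-diagonal, which is exactly where the ZYZ $\mathrm{Rot}$ chart is singular.

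Your fallback point fails as well: taking all blocks equal to $R_y(\pi/2)$ is degenerate, not generic. With the paper's $R_X$ encoding and $M=Z$:
\begin{itemize}
\item $R_y(\pi/2)|0\rangle=|+\rangle$ is a $\sigma_x$-eigenvector, so $K=0$;
\item $R_y(\pi/2)^\dagger ZR_y(\pi/2)=-X$ is diagonal in the encoding basis, so $\widetilde M_{-+}=0$.
\end{itemize}
Both factors of $c_{\pm L}$ vanish, so the two top rows of $J$ are zero there.

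An induction-free alternative is Sard's theorem. Fej\'er--Riesz plus the standard quantum-signal-processing completion show that, for $M$ a Pauli, every real trigonometric polynomial of degree $\le L$ with $\sup|f|<1$ is realized. The image of $\boldsymbol{\theta}\mapsto\tilde{\mathbf{c}}$ therefore has nonempty interior and must contain a regular value. This is exactly what the paper's appeal to expressivity needs but does not supply.
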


\begin{proof}[Proof sketch]
The bound $\mathrm{rank}(J) \leq 2L+1$ follows from
matrix dimensions since $J \in \mathbb{R}^{(2L+1) \times P}$.
Tightness --- $\mathrm{rank}(J) = 2L+1$ generically ---
follows from the phase-locking of
Lemma~\ref{lemma:serial_coupling}: each $h_j(x)$ is a
trigonometric polynomial of degree $\leq L$, so all
$\{h_j\}$ lie in
$\mathcal{T}_L = \mathrm{span}\{e^{ikx}:|k|\leq L\}$,
real dimension $2L+1$; for generic $\boldsymbol{\theta}$
the $2L+1$ rows of $J$ are linearly independent, confirming
that the ceiling is achieved.
For $P > 2L+1$, rank-nullity gives
$\dim(\ker J) \geq P-(2L+1) > 0$,
growing without bound as $P \to \infty$ at fixed $L$.
Every parameter in $\ker J$ produces no change in any Fourier
coefficient and receives identically zero gradient signal,
so the fraction of structurally decoupled parameters
$(P-\mathrm{rank}(J))/P \to 1$, giving~(ii).
Full proof in Appendix~\ref{app:prop2}.
\end{proof}

Proposition~\ref{prop:ill_cond} identifies $P = 2L+1$ as the
\emph{Jacobian-rank optimum} for serial circuits: below this
threshold the circuit cannot span all target Fourier directions;
above it, null space growth dominates and any further improvement
must come from the classical interpolation mechanism
(Section~\ref{sec:op}) rather than expanded Fourier coverage.
This structural gradient starvation arises from
the phase-locking of Lemma~\ref{lemma:serial_coupling}
regardless of how parameters are added.

\begin{corollary}[Parallel architecture: structural advantage via 
  independent phase trajectories]
  \label{cor:parallel}
  Consider a parallel $N$-qubit architecture with $L$ FM layers
  per qubit and encoding budget $E = NL$, with entangling ansatz
  layers and observable $M = Z_{N-1}$ as used in our experiments.

  \begin{enumerate}
    \item[(i)] \textbf{Onset threshold.}
      At fixed encoding budget $E = NL$, both serial and parallel
      architectures develop null directions in $J$ when $P > 2E+1$
      (Proposition~\ref{prop:ill_cond}), consistop\textbf{}ent with the shared
      frequency spectrum of
      Theorem~\ref{thm:spectral_invariance}.
      The trainability difference between architectures at fixed $E$
        therefore cannot be explained by a difference in onset threshold
        alone.

    \item[(ii)] \textbf{Absence of phase-locking.}
      By Lemma~\ref{lemma:serial_coupling}, serial gradient starvation arises because all $P$ parameters share
      a common global phase $G(x,\boldsymbol{\theta})=e^{i\varphi}$,
      confining $\mathrm{rank}(J^{(\mathrm{ser})}) \leq 2L+1$
      independently of $P$.
      For $N > 1$ qubits, each qubit's local reduced state
      evolves with an independent local phase $\varphi^{(n)}$,
      since the downstream operator $D_j^{(n)}$ and state
      $|\psi^{(n)}\rangle$ differ across qubits.
      Parameters on different qubits are therefore modulated
      by independent phases, breaking the global coherence
      that produces the serial rank ceiling.

    \item[(iii)] \textbf{Full column rank below the threshold.}
      Because phase-locking is absent, the column space of
      $J^{(\mathrm{par})}$ is not confined to a single
      $\mathcal{T}_L$-subspace.
      For $P \leq 2E+1$, $\sigma_{\min}(J^{(\mathrm{par})}) > 0$
        generically, so all parameters contribute to at least one
        Fourier direction and no parameter lies in $\ker J$.
        By contrast, for serial circuits
        $\dim(\ker J^{(\mathrm{ser})})/P \to 1$ as $P \to \infty$,
        so an ever-growing fraction of parameters are structurally
        decoupled from the loss across the same range.
  \end{enumerate}
  We validate the rank ceiling empirically in
Appendix~\ref{app:rank_saturation}
(Figure~\ref{fig:scaling_separation}); a complete
characterization of $\mathrm{rank}(J^{(\mathrm{par})})$
for entangling ans\"{a}tze remains an open problem.
\end{corollary}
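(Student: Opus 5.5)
The plan is to prove the three parts of Corollary~\ref{cor:parallel} in order, each resting on one earlier result. Part~(i) comes from counting dimensions, part~(ii) from the bilinear factorization (Proposition~\ref{prop:bilinear}) set against Lemma~\ref{lemma:serial_coupling}, and part~(iii) from a genericity argument for real-analytic maps.

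\textbf{Part (i).} By Theorem~\ref{thm:spectral_invariance} the parallel spectrum is $\Omega=\{-E,\ldots,E\}$. The real representative $\tilde{\mathbf c}$ of Definition~\ref{def:jacobian} therefore has $2E+1$ entries, so $J^{(\mathrm{par})}\in\mathbb{R}^{(2E+1)\times P}$. Rank--nullity then gives $\dim\ker J^{(\mathrm{par})}\geq P-(2E+1)>0$ whenever $P>2E+1$. This is exactly the argument of Proposition~\ref{prop:ill_cond}(ii), applied to a serial circuit with $L=E$, so the two architectures reach the null-space threshold at the same $P$.

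\textbf{Part (ii).} I would write each coefficient using Proposition~\ref{prop:bilinear} as
\begin{equation}
c_\omega=\sum_{a,b} v_a\bar v_b\,\widetilde M_{b,a}\,\delta_{\lambda_a-\lambda_b=\omega},
\end{equation}
where the $\lambda_a$ are the eigenvalues of the encoding Hamiltonian summed over qubits. Differentiating with respect to a parameter of $W^{(1)}$ acting on qubit $n$ changes only the local factor of $\mathbf v$ on that qubit. The resulting $x$-dependence is then carried by the local state $|\psi^{(n)}\rangle$ and its own time-reversed complement, which yields a local phase $\varphi^{(n)}$. The key step is to exhibit a single explicit configuration where the ratio $g_j/g_k$ is not real for $j,k$ on different qubits. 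Product $W^{(1)}$, a generic single-qubit rotation on each wire, and $W^{(2)}=I$ should suffice. Because the set where $\mathrm{Im}(g_j/g_k)\equiv 0$ is a real-analytic subvariety, one such witness shows it is a proper subvariety and hence measure zero. This matches the value $0.19$ reported in Figure~\ref{fig:phase_locking}.

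\textbf{Part (iii).} Each entry of $J^{(\mathrm{par})}$ is a real-analytic function of $\boldsymbol\theta$. The set where all $P\times P$ minors vanish is therefore either all of parameter space or of measure zero. It is thus enough to produce one parameter point with $\mathrm{rank}\,J^{(\mathrm{par})}=P$ for each $P\leq 2E+1$. I would build this witness inductively over qubits. Using the bilinear form, choose parameters so that each qubit's block of columns lands on Fourier directions disjoint from the other qubits' blocks, for example by shifting frequencies via the tensor structure. Then apply the serial analysis of Proposition~\ref{prop:ill_cond}(i) within each block, which supplies up to $2L+1$ independent columns per qubit. The entangling layer $W^{(2)}$ must then mix these blocks so that together they reach $2E+1$ rather than only $N(2L+1)-(N-1)$ directions.

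This last step is the main obstacle. Unentangled blocks overlap at frequency $0$ and at low frequencies, so the naive count falls short of $2E+1$. The corollary itself also admits that the full rank characterization for entangling ans\"atze is open. I would therefore first verify the witness symbolically for small $(N,L)$. I would then state (iii) as conditional on the existence of such a witness point, with analyticity carrying genericity from there, and support it with the numerical rank checks of Appendix~\ref{app:rank_saturation}.
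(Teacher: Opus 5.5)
Your part (i) is the paper's Step~2 (rank--nullity on the $2E+1$ rows). Your part (iii) has the same skeleton as the paper's Step~3: real-analyticity of $q(\boldsymbol{\theta})=\det(J^{(\mathrm{par})\top}J^{(\mathrm{par})})$ upgrades full column rank at one point to full column rank almost everywhere.

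\textbf{The gap is in (iii).} You never produce that point, and the point is the entire content of (iii). Stating (iii) conditionally on a witness is honest, but it is not a proof of the statement. The paper closes this step by asserting $q\not\equiv 0$ because the circuit is expressive (Remark~\ref{rem:sigma_min_conditions}, (C3)). Do not import that argument. Lie-algebraic universality of the Rot--CNOT gate set says which unitaries are reachable at unbounded depth. It says nothing about the rank of a fixed $P$-parameter map into $\mathbb{R}^{2E+1}$.

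In fact, for the ansatz ``as used in our experiments'' no witness exists:
\begin{itemize}
\item The first $R_Z$ of the wire-0 Rot gate in $W_0$ acts on $|0\rangle$, so it only produces a phase.
\item The last $R_Z$ on wire $N-1$ commutes with $M=Z_{N-1}$.
\item Hence those two columns of $J$ vanish for every $\boldsymbol{\theta}$, and $\sigma_{\min}(J^{(\mathrm{par})})=0$ identically.
\item Moreover, that parameterization always has $P=3N(L+1)\,\mathrm{tbl}>2E+1$. So (iii) only has content for a reduced parameterization that the statement never specifies.
\end{itemize}
A correct version must assume, as a hypothesis, that there are no gauge-redundant parameters, and then exhibit the witness.

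Your diagnosis of why the inductive witness fails is also off. Since $N(2L+1)-(N-1)=2NL+1=2E+1$, there is no counting shortfall in the unentangled, product-observable case. The real obstruction under $M=Z_{N-1}$ is different. Any product ansatz makes every column outside qubit $N-1$ vanish, and the fixed CNOTs cannot be removed by choice of parameters. So the witness must use the entangling layers essentially.

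\textbf{Part (ii) cannot carry the weight you put on it either.} A witness with $\mathrm{Im}(g_j/g_k)\neq 0$ across qubits does not separate parallel from serial circuits, because the serial phase-locking of Lemma~\ref{lemma:serial_coupling} fails. Take a serial circuit at a point where the state is $|+\rangle$ and $\omega=0$ in the final Rot gate. The gauge $|\psi^\perp\rangle=T|\psi\rangle=|-\rangle$ gives $g=-i/2$ for the $R_Z(\omega)$ parameter and $g=-1/2$ for the $R_Y(\theta)$ parameter, so $g_j/g_k=i$. This must happen: if all the ratios were real, the tightness part of Proposition~\ref{prop:rank} would be false.

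There are further problems with (ii):
\begin{itemize}
\item For $N>1$ the quantity $\langle\psi^\perp|\partial_j\psi\rangle$ has no canonical meaning. The orthogonal complement is $(2^N-1)$-dimensional, and the reduced states become mixed once CNOTs act.
\item Your proposed witness (product $W^{(1)}$, $W^{(2)}=I$) lies outside the Rot--CNOT family.
\item Proposition~\ref{prop:bilinear} only covers $L=1$.
\end{itemize}
In the paper, (ii) is likewise only an assertion that the $\varphi^{(n)}$ differ. Treat it as heuristic motivation, and put all the rigor into an explicit full-column-rank point for a redundancy-free parameterization.
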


\begin{remark}[Entangling ansätze]
\label{rem:entangling}
The parallel advantage of Corollary~\ref{cor:parallel} extends to
the Rot-CNOT ladder used in our experiments: CNOT gates are
unparameterized, so inter-qubit coupling does not grow with $P$,
and phase-locking depends on the $\mathbb{CP}^1$ geometry of a
single qubit and is broken by independent local phases regardless
of entanglement.
A full mechanistic justification is given in
Appendix~\ref{app:experiments}.
\end{remark}

\begin{proof}[Proof sketch]
For $N > 1$ qubits, each qubit's local reduced state evolves
with an independent local phase $\varphi^{(n)}$, since the
downstream operator $D_j^{(n)}$ and state $|\psi^{(n)}\rangle$
differ across qubits.
The global coherence condition of
Lemma~\ref{lemma:serial_coupling} is therefore broken: columns
of $J^{(\mathrm{par})}$ are modulated by independent phases
and are not confined to a single $\mathcal{T}_L$-subspace,
giving $\sigma_{\min}(J^{(\mathrm{par})}) > 0$ generically
for $P \leq 2E+1$ by a real-analyticity argument.
Null directions appear at $P > 2E+1$ by rank-nullity applied
to the shared spectrum $|\Omega| = 2E+1$, the same threshold
as the serial case at fixed $E$.
Full proof in Appendix~\ref{app:cor}.
\end{proof}

\paragraph{Summary.}
Phase-locking in serial circuits confines all $P$ gradient
directions to $\mathcal{T}_L$, as confirmed directly by
Figure~\ref{fig:phase_locking}.
Together, Proposition~\ref{prop:ill_cond} and
Corollary~\ref{cor:parallel} establish the key separation
(Theorem~\ref{thm:separation}, Appendix~\ref{app:separation}):
serial gradient starvation sets in at $P > 2L+1$ regardless of
parameter count, while parallel architectures maintain
$\sigma_{\min}(J^{(\mathrm{par})}) > 0$ generically for
$P \leq 2E+1$, deferring null space growth to the same threshold
$P > 2E+1$ that applies to serial circuits at fixed $E$.

\begin{figure}[t]
  \centering
  \includegraphics[width=\linewidth]{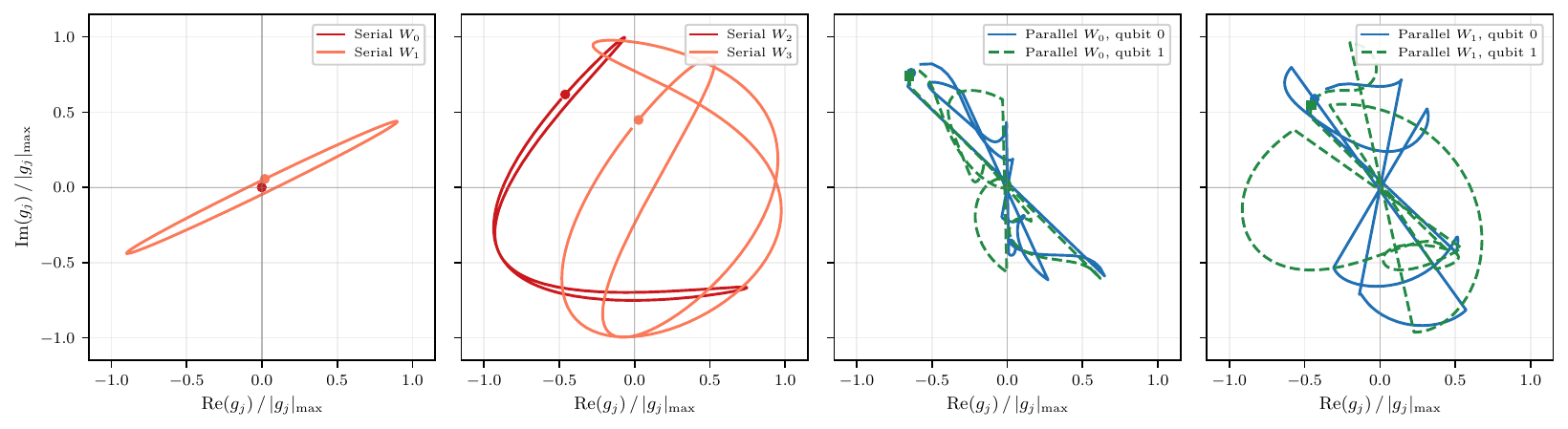}
  \caption{
  \textbf{Phase-locking in serial circuits vs.\ independent
  phase trajectories in parallel circuits.}
  Each curve shows $g_j(x) = \langle\psi^\perp|\partial_j\psi
  \rangle \in \mathbb{C}$, normalized by its maximum magnitude,
  as $x$ sweeps $[0,2\pi]$ and plotted in the complex plane.
  \textbf{Left two panels (serial, $N=1$, $L=4$):} parameters
  from the first four ansatz blocks $W_0,\ldots,W_3$ trace
  curves lying entirely on the real axis of $\mathbb{C}$,
  confirming that $g_j = h_j \in \mathbb{R}$ exactly
  (Lemma~\ref{lemma:serial_coupling}, Appendix~\ref{app:lemma}):
  the global phase factor $G = e^{i\varphi} = 1$ identically,
  so $g_j/g_k \in \mathbb{R}$ at every $x$
  ($\mathrm{Im}(g_j/g_k) = 0.000$ exactly).
  Curves are related by real scaling factors $h_j/h_k$,
  the direct signature of phase-locking.
  \textbf{Right two panels (parallel, $N=2$, $L=2$):}
  parameters on qubit~0 (solid) and qubit~1 (dashed) from
  the same ansatz block trace curves with genuinely different
  shapes that leave the real axis, reflecting independent
  local phases $\varphi^{(0)} \neq \varphi^{(1)}$
  ($|\mathrm{Im}(g_j/g_k)| = 0.19$ on average).
  This holds consistently for both blocks $W_0$ and $W_1$,
  confirming it is a structural property of the parallel
  architecture rather than a parameter-specific coincidence.
  }
  \label{fig:phase_locking}
\end{figure}

\section{FM Layers vs.\ Trainable Blocks: A Structural Comparison}
\label{sec:op}
Section~\ref{sec:theory} establishes the rank ceiling $2NL+1$
as the fundamental structural constraint on trainability:
serial circuits are confined to this ceiling by phase-locking,
while parallel circuits maintain full column rank up to it.
This ceiling is not fixed --- it is a property of the encoding
structure, not the ansatz, and can be raised by adding FM layers.
This observation directly motivates the central question of this
section: given a parallel circuit where gradient starvation is
not yet the bottleneck, which route to more parameters is more
effective --- adding \emph{feature map (FM) layers}, which increase $P$
while simultaneously expanding the Fourier spectrum and raising
the rank ceiling, or adding \emph{trainable blocks}, which
increase $P$ without changing either?
This section shows these two routes have fundamentally different
effects on trainability, with the FM route achieving
$R^2 \geq 0.95$ with $1.6$--$2.2\times$ fewer parameters
across all tested architectures.
Full experimental details and degree justification are given in
Appendices~\ref{app:experiments} and~\ref{app:degree_justification}.

\subsection{Two Routes to More Parameters}
\label{subsec:two_routes}

Starting from a circuit with $L$ FM layers and one trainable
block layer per encoding block, there are two natural ways to
add parameters while keeping the architecture parallel:

\begin{enumerate}
  \item \textbf{FM route}: increase the number of FM layers $L$,
    keeping the trainable block depth $\mathrm{tbl}=1$ fixed.
    Each extra FM layer adds $N \cdot 3$ parameters and
    simultaneously expands $|\Omega|$, raising the Jacobian rank
    ceiling and making more Fourier directions accessible to the
    optimizer.

  \item \textbf{Trainable blocks route}: increase the number of
    trainable block layers $\mathrm{tbl}$, keeping $L = L_{\min}$
    fixed.
    Each extra block adds $(L_{\min} + 1) \cdot N \cdot 3$ parameters
    without changing the Fourier spectrum or redundancy structure.
    The Jacobian rank ceiling remains fixed.
\end{enumerate}

As established in Corollary~\ref{cor:parallel}, parallel circuits
have a rank ceiling of $|\Omega| = 2E+1$ beyond which additional
parameters create null directions in $J$.
Adding FM layers raises this ceiling; adding trainable blocks increases $P$ without raising the
ceiling, relying on the classical interpolation mechanism
($P \geq n_{\mathrm{train}}$) with no quantum-specific benefit.
Both routes assume $L \geq L_{\min} = \lceil\mathrm{deg}/N\rceil$
FM layers are already present; below $L_{\min}$ the target frequencies
lie outside the circuit's function class and neither route can improve
training. In all experiments we fix $L = L_{\min}$ for the trainable
blocks route and $\mathrm{tbl}=1$ for the FM route, isolating the
effect of each.

\subsection{The FM Route: Rank Growth}
\label{subsec:fm_route}
Figure~\ref{fig:FM_vs_tbl} shows both routes for
$N \in \{1, 6\}$, representing contrasting efficiency regimes:
$N=1$ (strong gradient starvation, large target degree relative
to encoding budget) and $N=6$ (mild gradient starvation, target
degree well within the circuit's accessible spectrum at
$L_{\min}$).
As $L$ increases from $L_{\min}$, the normalized Jacobian QFIM
eigenvalue spectra $\lambda_i / \lambda_1$ shift visibly to the
right.
We characterize this by the \emph{spectral knee} --- the rank
index beyond which $\lambda_i/\lambda_1$ drops sharply --- a
threshold-free measure of the number of genuinely exploitable
gradient directions.
Along the FM route, the knee generally shifts rightward
with each extra layer: at $L=12$ the spectrum drops sharply
at rank $25$; by $L=22$ this extends to rank $33$.
Along the trainable blocks route, the spectral knee is
frozen at the theoretical rank ceiling $2NL_{\min}+1$ for
all tested architectures: $N=1$ freezes at rank $24$--$25$
(ceiling $25$), $N=2$ at rank $\approx 41$ (ceiling $41$),
$N=4$ at rank $\approx 57$ (ceiling $57$), and $N=6$ at
rank $12$--$13$ (ceiling $13$)
(Appendix~\ref{app:n2_n4_fm_tbl},
Figure~\ref{fig:FM_vs_tbl} bottom right).
This confirms that adding trainable blocks provides access
to no new Fourier directions regardless of architecture:
the rank ceiling imposed by $L_{\min}$ is tight, and the
tbl route can only improve coverage of directions already
within that ceiling --- a purely classical interpolation
mechanism with no quantum-specific benefit.

\begin{figure}[t]
  \centering
  \includegraphics[width=\linewidth]{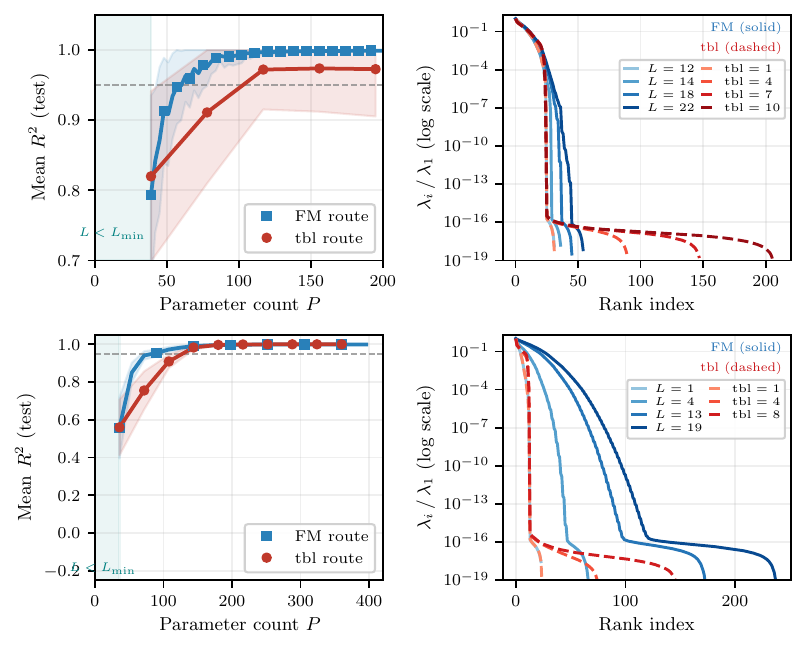}
  \caption{
    \textbf{FM route vs.\ trainable blocks route for $N=1$
(gradient starvation) and $N=6$ (high-qubit regime).}
    \textbf{Left panels:} mean $R^2$ (test) vs.\ parameter
    count $P$ for the FM route (blue, solid) and trainable
    blocks route (red, dashed).
    Teal shading: expressivity gap ($L < L_{\min}$).
    Dashed gray line: $R^2 = 0.95$ threshold.
    \textbf{Right panels:} normalized Jacobian QFIM eigenvalue
    spectra $\lambda_i/\lambda_1$ at selected $L$ values (FM,
    solid blue) and $\mathrm{tbl}$ values (trainable blocks,
    dashed red).
    \emph{Top row ($N=1$):} the spectral knee shifts rightward
    with each extra FM layer while remaining frozen
    at the theoretical rank ceiling $2L_{\min}+1 = 25$, despite $R^2$ improving along
    both routes.
    \emph{Bottom row ($N=6$):} both routes reach $R^2 \geq 0.95$.
    The spectral knee shifts rightward along the FM route and
    remains frozen along the tbl route, consistent with the
    structural prediction of Section~\ref{subsec:fm_route}.
    Parameter efficiency ratios for all architectures are given
    in Table~\ref{tab:efficiency}.
  }
  \label{fig:FM_vs_tbl}
\end{figure}

\paragraph{Spectral knee as a trainability diagnostic.}
The spectral knee --- the rank index beyond which
$\lambda_i/\lambda_1$ drops sharply --- distinguishes the two
routes structurally.
Along the FM route, the knee grows monotonically with each extra
layer because the ceiling $2NL+1$ itself expands: each FM layer
makes genuinely new Fourier directions accessible to the optimizer.
Along the trainable blocks route, the ceiling is fixed at
$2NL_{\min}+1$; the knee is frozen at this ceiling throughout
the tbl sweep, confirming that no new Fourier directions become
accessible via the classical interpolation mechanism.
This distinction is quantified in
Appendix~\ref{app:spectral_knee} (Figure~\ref{fig:spectral_knee}).

\subsection{The Trainable Blocks Route: Interpolation Threshold}
\label{subsec:tbl_route}

Despite the spectral knee being frozen at the fixed
$L_{\min}$ ceiling,, $R^2$ still improves substantially
with $\mathrm{tbl}$.
The governing mechanism is the classical interpolation threshold:
once the parameter count $P$ exceeds the training set size
$n_{\mathrm{train}}$, the system $f(x_i;\boldsymbol{\theta}) = y_i$
becomes overdetermined and gradient descent finds an interpolating
solution~\citep{belkin2019reconciling,bartlett2020benign}.
This is a purely classical phenomenon with no dependence on
quantum structure.
The spectral knee of $\hat{\mathcal{F}}$ remains frozen
at the $L_{\min}$ ceiling throughout the $\mathrm{tbl}$
sweep, confirming that no new Fourier directions become
accessible as trainable blocks are added.
For $N=6$, deg$=6$, both routes reach $R^2 \geq 0.95$:
the FM route at $L^*=4$ ($P=90$) and the trainable blocks
route at tbl$^*=4$ ($P=144$), giving a $1.6\times$ efficiency
ratio.
The lower ratio relative to $N=1,2,4$ reflects that the
circuit is already expressive enough at $L_{\min}=1$ for a
degree-6 target, so extra FM layers provide less marginal
benefit than at higher target degrees.

Table~\ref{tab:interp} (Appendix~\ref{app:interp_table}) shows
that $R^2 \geq 0.95$ is first achieved within 1--3 $\mathrm{tbl}$
steps of the interpolation threshold $P \geq n_{\mathrm{train}} = 200$
across all architectures that reach saturation, consistent with
this mechanism.

\subsection{Efficiency Comparison}
\label{subsec:efficiency}

Table~\ref{tab:efficiency} (Appendix~\ref{app:efficiency_table}) 
quantifies the parameter
efficiency of both routes for architectures where both
reach saturation. The FM route requires $1.6$--$2.2\times$
fewer parameters across all four architectures: each new
FM layer raises the rank ceiling, keeping a growing fraction
of parameters exploitable, while the tbl route keeps the
ceiling fixed and relies on classical interpolation.
This advantage spans $N=1$ to $N=6$ and target degrees
from $6$ to $28$, confirming it reflects a structural
property of the FM route rather than an architecture-dependent
accident.
The FM route is the only route that engages a quantum-specific
mechanism: it simultaneously expands the accessible frequency spectrum
and shifts the spectral knee rightward in a way that has no classical
analog.

\subsection{Validation on Real-World Data}
\label{subsec:realworld}
The structural predictions of Sections~\ref{sec:theory}
and~\ref{sec:op} are validated on the Nottingham temperature
dataset --- 240 monthly mean air temperatures recorded at
Nottingham Castle from 1920 to 1939 --- in
Appendix~\ref{app:realworld}, using two experiments that
probe different regimes.

With encoding budget $E = N \times L_{\min} = 12$
(Figure~\ref{fig:realworld_deg12}), the model spectrum is
insufficient at $L_{\min}$ for all architectures: the
Nottingham spectrum extends beyond $L_{\min}$ coverage,
so FM layers provide a dual benefit --- expanding Jacobian
rank coverage \emph{and} expressivity --- while the trainable
blocks route fails categorically ($R^2 < 0$ throughout) for
all four architectures, unable to expand either.

With encoding budget $E = N \times L_{\min} = 24$
(Figure~\ref{fig:realworld_deg24}), expressivity is
sufficient at $L_{\min}$ and the experiment isolates
parameter efficiency.
The FM route achieves $1.9$--$2.7\times$ fewer parameters
than the tbl route for $N \in \{1,2,4\}$, consistent with
the synthetic results of Table~\ref{tab:efficiency}.
For $N=6$, the result inverts: the tbl route reaches
$R^2 \geq 0.95$ while the FM route plateaus at
$R^2 \approx 0.83$, consistent with the $N=6$ circuit
entering the Larocca underparameterization regime
$P \approx R_{\max}=126$~\citep{larocca2023theory} early
in the FM sweep, where adding encoding layers increases
the parameter demand faster than the added parameters
satisfy it.

\section{Conclusion}
\label{sec:conclusion}

The trainability gap between serial and parallel QNN architectures
at fixed encoding budget $E = NL$ has a precise mechanistic
explanation.
The $\mathbb{CP}^1$ geometry of a single-qubit state space forces
all $P$ gradient directions to share a common global phase
(Lemma~\ref{lemma:serial_coupling}, Figure~\ref{fig:phase_locking}),
confining the column space of the coefficient matching Jacobian $J$
to a subspace of dimension $2L+1$: as $P$ grows at fixed $L$, the
fraction of parameters receiving identically zero gradient signal
approaches 1 --- \emph{structural gradient starvation}.
Parallel architectures defer this via independent phase trajectories
across qubits, maintaining $\sigma_{\min}(J^{(\mathrm{par})}) > 0$
generically for $P \leq 2E+1$, so every parameter contributes to
at least one exploitable Fourier direction up to the shared onset
threshold.

\paragraph{Add FM layers, not trainable blocks.}
The structural analysis directly motivates the practical
recommendation.
FM layers simultaneously expand the accessible Fourier spectrum
and shift the spectral knee of the Jacobian QFIM rightward ---
a quantum-specific mechanism with no classical analog.
Trainable blocks increase $P$ without changing either the
spectrum or the spectral knee, improving training solely via
the classical interpolation threshold ($P \geq n_{\mathrm{train}}$).
The $1.6$--$2.2\times$ parameter efficiency advantage of the FM route
is a direct consequence of this structural difference: FM
parameters are added below the rank ceiling, while trainable
block parameters accumulate in $\ker J$.

\paragraph{Limitations and future work.}
Theoretical results are proved for the architecture extremes
($N=1$ serial, product ansatz parallel); extension to hybrid
architectures and general entangling ans\"{a}tze remains open.
The efficiency advantage of the FM route is characterized for 
circuits operating below the geometric threshold 
$R_{\max} = 2^{N+1}-2$ of~\citet{larocca2023theory}; 
in this underparameterized regime, trainable blocks may offer 
competitive parameter efficiency by better exploiting existing 
state-space directions before the accessible spectrum is fully 
saturated, and a more complete characterization of the FM 
vs.\ trainable blocks trade-off in this regime remains open.
Real-world validation on the Nottingham temperature dataset
confirms the structural predictions and identifies the
boundary of the FM efficiency advantage.
With encoding budget $E = N \times L_{\min} = 12$, the
trainable blocks route fails categorically for all
architectures while the FM route succeeds by simultaneously
expanding expressivity and Jacobian rank coverage.
With $E = N \times L_{\min} = 24$, the FM route achieves
$1.9$--$2.7\times$ fewer parameters than the tbl route for
$N \in \{1,2,4\}$, consistent with the synthetic results.
The exception is $N=6$, where the FM advantage inverts:
the tbl route reaches $R^2 \geq 0.95$ while the FM route
plateaus, consistent with the circuit entering the Larocca
underparameterization regime early in the FM sweep.
This identifies a natural boundary condition for the FM
efficiency advantage and motivates a more complete
characterization of the FM vs.\ trainable blocks trade-off
in the Larocca-limited regime as future work.

\section*{Acknowledgments}
This work has been supported by the LMU Sustainability Fund (EfOiE),
the BMFTR (QuCUN, QuaRDS, CAQAO), the Munich Quantum Valley (K5, K7),
and the Bavarian StMWi (6GQT).
The sole responsibility for the report's contents lies with the authors.

\bibliography{neurips_2026}
\bibliographystyle{neurips_2026}


\clearpage
\appendix
\renewcommand{\thesection}{\Alph{section}}

\section{Proof of Proposition~\ref{prop:bilinear}}
\label{app:bilinear}

\begin{proposition}[Bilinear factorization for general parallel
  architectures]
\label{prop:bilinear}
For a parallel $N$-qubit architecture with $L=1$ encoding layer,
the Fourier coefficients of the circuit output are:
\begin{equation}
  c_\omega^{(\mathrm{par})}
  = \sum_{\substack{j,k \in \{0,\ldots,2^N-1\} \\
    \sum_n(k_n - j_n) = \omega}}
    \bar{v}_k\,\widetilde{M}_{kj}\,v_j,
\end{equation}
where $\mathbf{v} = W^{(1)}|0\rangle^{\otimes N}$ and
$\widetilde{M} = (W^{(2)})^\dagger M\, W^{(2)}$.
Every term is \emph{bilinear}: $\bar{v}_k v_j$ depends only on
$W^{(1)}$ and $\widetilde{M}_{kj}$ only on $W^{(2)}$, with no
cross-coupling between the two ansatz layers.
\end{proposition}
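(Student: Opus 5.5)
We prove the statement by writing the circuit in the computational basis and diagonalising the encoding layer. For the parallel architecture with $L=1$, the circuit is $U(x,\boldsymbol{\theta}) = W^{(2)}\,S_N(x)\,W^{(1)}$, where $S_N(x)=\bigotimes_{n=1}^N e^{ix\sigma^{(n)}/2}$. Without loss of generality take the encoding Pauli to be $\sigma_z$. If a different Pauli is used, it is unitarily conjugate to $\sigma_z$, and the conjugating single-qubit unitaries can be absorbed into $W^{(1)}$ and $W^{(2)}$. This preserves the bilinear split, because the absorbed unitaries are constants independent of the other layer. Then $S_N(x)$ is diagonal in the computational basis $\{|k\rangle\}_{k=0}^{2^N-1}$, with eigenvalue $e^{ix\lambda_k}$. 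Here $\lambda_k = \sum_n \tfrac12(-1)^{k_n}$, equivalently $\lambda_k=\tfrac N2-\sum_n k_n$ with $k_n\in\{0,1\}$ the bits of $k$.

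First, set $\mathbf{v}=W^{(1)}|0\rangle^{\otimes N}$ and expand the output state as $S_N(x)\mathbf{v}=\sum_j e^{ix\lambda_j}v_j|j\rangle$. Second, substitute into
\[
f(x)=\langle 0|W^{(1)\dagger}S_N(x)^\dagger\,(W^{(2)})^\dagger M W^{(2)}\,S_N(x)W^{(1)}|0\rangle .
\]
Third, insert the resolution of identity on both sides of $\widetilde{M}=(W^{(2)})^\dagger M W^{(2)}$. This gives
\[
f(x)=\sum_{j,k}\bar v_k\,\widetilde{M}_{kj}\,v_j\,e^{ix(\lambda_j-\lambda_k)} .
\]
The constant $N/2$ cancels in the difference, so $\lambda_j-\lambda_k=\sum_n(k_n-j_n)$. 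Fourth, group the terms by the integer value $\omega=\sum_n(k_n-j_n)\in\{-N,\dots,N\}$. This yields exactly the stated formula for $c_\omega^{(\mathrm{par})}$ and recovers the spectrum of Theorem~\ref{thm:spectral_invariance} with $E=N$.

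The bilinearity claim is then immediate by inspection. The factor $\bar v_k v_j$ is a function only of the entries of $W^{(1)}|0\rangle$, and so only of the parameters $\boldsymbol{\theta}_0$. The factor $\widetilde{M}_{kj}=\langle k|(W^{(2)})^\dagger M W^{(2)}|j\rangle$ depends only on $\boldsymbol{\theta}_1$. The only $x$-dependence is the diagonal phase, which was factored out. Consequently $\partial c_\omega/\partial\theta$ for $\theta\in\boldsymbol{\theta}_0$ is linear in $\widetilde{M}$, and the derivative with respect to $\theta\in\boldsymbol{\theta}_1$ is linear in $\bar v_k v_j$. There are no cross terms of the kind appearing in \eqref{eq:ser_c2}.

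The main obstacle is the bookkeeping of sign and index conventions. We must check that the phase difference is $\sum_n(k_n-j_n)$ rather than its negative, given the convention $S(x)=e^{ixH}$ and the ordering of bra and ket. We must also confirm consistency with the $E=2$ example \eqref{eq:par_c2}: there $\omega=+2$ forces $k=11$ and $j=00$, i.e.\ $k=4$, $j=1$ in one-based indexing, giving the single term $\bar v_4\widetilde{M}_{4,1}v_1$. If the signs come out reversed, relabelling $\omega\to-\omega$ together with the conjugate symmetry $c_{-\omega}=\overline{c_\omega}$ restores the stated form. The second delicate point is justifying the reduction to $\sigma_z$ encoding. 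We will verify that the basis change is a fixed product unitary, so that absorbing it keeps $W^{(1)}$ and $W^{(2)}$ as independent functions of disjoint parameter sets.
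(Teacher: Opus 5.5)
Your proposal is correct and follows essentially the same route as the paper. Both diagonalize the encoding layer in the $\sigma_z$ eigenbasis (your absorption of the fixed basis change into $W^{(1)}$ and $W^{(2)}$ plays the role of the paper's ``work in an appropriate basis''), expand $f$ in that basis, group terms by $\omega=\sum_n(k_n-j_n)$, and read off bilinearity term by term. Your phase $e^{ix(\lambda_j-\lambda_k)}$ is in fact the correct one: the paper's intermediate $e^{i(\phi_k-\phi_j)}$ and its identity $\phi_k/x-\phi_j/x=\sum_n(k_n-j_n)$ each carry a sign slip, and these cancel, so the final formula agrees with yours.
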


\begin{proof}
\textbf{Step 1: Diagonal encoding and circuit output.}
Since any Pauli operator $\sigma$ is unitarily equivalent to
$\sigma_z$, the FM layer $S(x) = e^{ixH}$ with
$H = \frac{1}{2}\sigma$ is diagonal in an appropriate
single-qubit basis, with eigenvalues $\pm\frac{1}{2}$.
Working in this basis, $S(x)|k\rangle = e^{i\phi_k(x)}|k\rangle$
where
\begin{equation}
  \phi_k(x) = \frac{x}{2}\sum_{n=1}^N(1-2k_n),
\end{equation}
with $k_n \in \{0,1\}$ the $n$-th bit of $k$.
Setting $\mathbf{v} = W^{(1)}|0\rangle^{\otimes N}$ and
$\widetilde{M} = (W^{(2)})^\dagger M\, W^{(2)}$, the circuit
output expands as:
\begin{equation}
  f(x;\boldsymbol{\theta})
  = \langle\mathbf{v}|\,e^{-ixH}\widetilde{M}\,e^{ixH}|\mathbf{v}\rangle
  = \sum_{j,k=0}^{2^N-1}
    \bar{v}_k\,\widetilde{M}_{kj}\,v_j\,
    e^{i(\phi_k(x)-\phi_j(x))}.
  \label{eq:app_bilinear_expand}
\end{equation}

\textbf{Step 2: Frequency decomposition.}
The Fourier frequency of term $(j,k)$ is:
\begin{equation}
  \omega_{jk}
  = \phi_k(x)/x - \phi_j(x)/x
  = \sum_{n=1}^N(k_n - j_n)
  \in \{-N,\ldots,N\}.
\end{equation}
Collecting all terms with the same frequency $\omega$ gives:
\begin{equation}
  c_\omega
  = \sum_{\substack{j,k \in \{0,\ldots,2^N-1\}\\
    \sum_n(k_n-j_n)=\omega}}
    \bar{v}_k\,\widetilde{M}_{kj}\,v_j.
  \label{eq:app_c_omega}
\end{equation}
This sum includes all pairs $(j,k)$ with the appropriate net
qubit difference, including non-complementary pairs where
$j_n = k_n$ at some positions.

\textbf{Step 3: Bilinearity.}
Every term in~\eqref{eq:app_c_omega} factorizes as:
\begin{equation}
  \bar{v}_k\,\widetilde{M}_{kj}\,v_j
  = \underbrace{\bar{v}_k\,v_j}_{W^{(1)}\text{-factor}}
  \cdot
  \underbrace{\widetilde{M}_{kj}}_{W^{(2)}\text{-factor}}.
\end{equation}
The entries $v_j, v_k$ of $\mathbf{v} = W^{(1)}|0\rangle^{\otimes N}$
depend only on $W^{(1)}$ and are independent of $W^{(2)}$.
The entries $\widetilde{M}_{kj}$ of
$(W^{(2)})^\dagger M\, W^{(2)}$ depend only on $W^{(2)}$
and are independent of $W^{(1)}$.
No term couples $W^{(1)}$ and $W^{(2)}$ multiplicatively,
establishing the bilinear separation for all frequencies
$\omega \in \{-N,\ldots,N\}$.
The $N=2$ case recovers equation~\eqref{eq:par_c2} directly.
\end{proof}

\section{Full Proof of Proposition~\ref{prop:rank}}
\label{app:prop1}

\begin{proof}
We work at fixed $(x,\boldsymbol{\theta})$, writing
$|\psi\rangle=|\psi(x,\boldsymbol{\theta})\rangle$.

\textbf{Step 1: Tangent space of $\mathbb{CP}^1$.}
For $N=1$, normalized pure states form $\mathbb{CP}^1\cong S^2$,
a smooth real-2-dimensional manifold~\citep{bengtsson2006geometry}.
At any $|\psi\rangle$, the tangent space
$T_{|\psi\rangle}\mathbb{CP}^1$ is spanned over $\mathbb{C}$ by
the unique (up to phase) unit vector $|\psi^\perp\rangle$ with
$\langle\psi|\psi^\perp\rangle=0$.
Any infinitesimal variation satisfies:
\begin{equation}
  |\partial_j\psi\rangle
  = g_j|\psi^\perp\rangle + \lambda_j|\psi\rangle,
  \qquad g_j\in\mathbb{C},\;\lambda_j\in i\mathbb{R},
\end{equation}
where $\lambda_j\in i\mathbb{R}$ follows from
$\langle\psi|\psi\rangle=1$.

\textbf{Step 2: The gradient of $f$.}
$f=\langle\psi|M|\psi\rangle$ for Hermitian $M$.
Differentiating and using
$\bar\lambda_j\langle\psi|M|\psi\rangle\in i\mathbb{R}$
(vanishes under $\mathrm{Re}[\cdot]$):
\begin{equation}
  \frac{\partial f}{\partial\theta_j}
  = 2\,\mathrm{Re}[\bar g_j\langle\psi^\perp|M|\psi\rangle].
\end{equation}

\textbf{Step 3: Shared scalar $\mu$.}
Define $\mu:=\langle\psi^\perp|M|\psi\rangle\in\mathbb{C}$,
which is the same for every $j$.
Writing $\mu=a+ib$ and $g_j=u_j+iw_j$:
\begin{equation}
  \frac{\partial f}{\partial\theta_j}
  = 2(au_j + bw_j)
  = 2\bigl[a\,\mathrm{Re}(g_j) + b\,\mathrm{Im}(g_j)\bigr].
\end{equation}
Hence
$\nabla_{\boldsymbol{\theta}} f = 2a\,\mathrm{Re}(g) +
2b\,\mathrm{Im}(g)
\in\mathrm{span}\{\mathrm{Re}(g),\mathrm{Im}(g)\}$,
which has dimension $\leq 2$.

\textbf{Step 4: Rank of the QFIM.}
\begin{equation}
  \mathcal{F}_{jk}
  = 4\,\mathrm{Re}[\bar g_j g_k]
  = 4\bigl(\mathrm{Re}(g)_j\mathrm{Re}(g)_k
    + \mathrm{Im}(g)_j\mathrm{Im}(g)_k\bigr),
\end{equation}
a sum of two rank-one matrices~\citep{stokes2020quantum},
so $\mathrm{rank}(\mathcal{F})\leq 2$.

\textbf{Tightness.}
For generic $M$ and $|\psi\rangle$, $\mu\neq 0$ and
$g\not\in\mathbb{R}^P$, so $\mathrm{Re}(g)$ and $\mathrm{Im}(g)$
are linearly independent and the bound is tight.
\end{proof}

\section{Full Proof of Lemma~\ref{lemma:serial_coupling}}
\label{app:lemma}

\begin{proof}
Fix $(x,\boldsymbol{\theta})$.
For each $j\in\{1,\ldots,L\}$ define:
\begin{align}
  D_j &:= \prod_{l=L}^{j+1}V_l(\boldsymbol{\theta}_l)
    S_l(x)\in\mathrm{SU}(2),\\
  |\phi_j\rangle &:=
    V_j(\boldsymbol{\theta}_j)\,S_j(x)
    \prod_{l=j-1}^{1}V_l(\boldsymbol{\theta}_l)\,S_l(x)
    |\psi_0\rangle
    \in\mathbb{C}^2,
\end{align}
so that $|\psi\rangle=D_j|\phi_j\rangle$ and
$|\partial_j\psi\rangle=D_j(\partial_j V_j)V_j^{-1}|\phi_j\rangle$.

\textbf{Step 1: Gradient coefficient.}
Since $|\psi\rangle = D_j|\phi_j\rangle$ and $D_j \in \mathrm{SU}(2)$
is unitary, the orthogonal complement transforms as
$|\psi^\perp\rangle = D_j|\hat\phi_j^\perp\rangle$, giving
$D_j^\dagger|\psi^\perp\rangle = |\hat\phi_j^\perp\rangle$.
Differentiating $|\psi\rangle = D_j|\phi_j\rangle$ with respect
to $\theta_j$ and using the fact that $D_j$ does not depend on
$\theta_j$ (it is the sub-circuit \emph{above} layer $j$):
\begin{equation}
  |\partial_j\psi\rangle = D_j\,(\partial_j V_j)\,V_j^{-1}|\phi_j\rangle.
\end{equation}
The gradient coefficient is therefore:
\begin{equation}
  g_j = \langle\psi^\perp|\partial_j\psi\rangle
  = \langle\psi^\perp|\,D_j\,(\partial_j V_j)V_j^{-1}|\phi_j\rangle
  = \langle D_j^\dagger\psi^\perp|
    (\partial_j V_j)V_j^{-1}|\phi_j\rangle,
\end{equation}
where the last equality uses unitarity of $D_j$.
The key observation is that the right-hand factor
$(\partial_j V_j)V_j^{-1}|\phi_j\rangle$ depends only on
the sub-circuit \emph{up to and including} layer $j$, while
$D_j^\dagger|\psi^\perp\rangle$ depends only on the sub-circuit
\emph{above} layer $j$. Step 2 shows that the latter is
independent of $j$ up to a global phase.

\textbf{Step 2: Canonical gauge and phase independence.}
To eliminate the $\mathrm{U}(1)$ ambiguity in the definition of
$|\psi^\perp\rangle$, we fix a canonical gauge via the
anti-unitary time-reversal operator $T = i\sigma_y K$,
setting $|\psi^\perp\rangle := T|\psi\rangle$ and
$|\hat\phi_j^\perp\rangle := T|\hat\phi_j\rangle$ for all $j$.
This choice is $\mathrm{U}(1)$-equivariant: if $|\psi\rangle
\mapsto e^{i\alpha}|\psi\rangle$ then $T|\psi\rangle \mapsto
e^{-i\alpha}T|\psi\rangle$, so relative phases between
$|\psi^\perp\rangle$ and $|\hat\phi_j^\perp\rangle$ are
preserved independently of $j$.
For any $D \in \mathrm{SU}(2)$ with $D =
\bigl(\begin{smallmatrix}a & -\bar{b} \\ b & \bar{a}
\end{smallmatrix}\bigr)$, direct computation gives
\begin{equation}
  (i\sigma_y)\,D^* = D\,(i\sigma_y),
\end{equation}
which implies $T D = D T$ and hence $T D_j^\dagger = D_j^\dagger T$.
Using this together with $|\psi^\perp\rangle = T|\psi\rangle$:
\begin{equation}
  D_j^\dagger |\psi^\perp\rangle
  = D_j^\dagger T|\psi\rangle
  = T D_j^\dagger |\psi\rangle
  = T|\hat\phi_j\rangle
  = |\hat\phi_j^\perp\rangle,
\end{equation}
where the third equality uses $D_j^\dagger|\psi\rangle = |\hat\phi_j\rangle$
by construction.
The phase factor is therefore:
\begin{equation}
  e^{i\varphi_j}
  = \langle T\hat\phi_j \,|\, D_j^\dagger T\psi \rangle
  = \langle T\hat\phi_j \,|\, T D_j^\dagger \psi \rangle
  = \overline{\langle \hat\phi_j \,|\, D_j^\dagger \psi \rangle}
  = \overline{\langle \hat\phi_j \,|\, \hat\phi_j \rangle}
  = 1,
\end{equation}
where the third equality uses the anti-unitarity of $T$:
$\langle Ta \,|\, Tb \rangle = \overline{\langle a \,|\, b \rangle}$.
Hence $G = e^{i\varphi} = 1$ independently of $j$, $x$, and
$\boldsymbol{\theta}$: the global phase factor is identically unity
and each $g_j = h_j$ is real-valued.

\textbf{Step 3: Factorization.}
Since $D_j^\dagger|\psi^\perp\rangle = |\hat\phi_j^\perp\rangle$
(Step 2), substituting back:
\begin{equation}
  g_j(x) =
  \underbrace{
    \langle\hat\phi_j^\perp|
    (\partial_j V_j)V_j^{-1}|\phi_j\rangle
  }_{h_j(x) \,\in\, \mathbb{R}},
\end{equation}
establishing $g_j = h_j$ with $h_j$ real-valued.
The per-input gradient rank bound $\leq 2$ then follows from
Proposition~\ref{prop:rank}.

\textbf{Step 4: Phase-locking and rank of $J$.}
Since $g_j = h_j \in \mathbb{R}$ for all $j$, the subspace
$\mathcal{V}(x,\boldsymbol{\theta}) =
\mathrm{span}\{\mathrm{Re}(g),\mathrm{Im}(g)\}$
collapses: $\mathrm{Im}(g_j) = 0$ identically, so
$g_j/g_k \in \mathbb{R}$ at every $x$ --- the phase-locking
of Lemma~\ref{lemma:serial_coupling} is exact, not merely
approximate.
Each $h_j$ is determined by $j$ encoding gates $S_1,\ldots,S_j$
and is therefore a trigonometric polynomial of degree
$\leq j \leq L$~\citep{schuld2021effect}.
All $\{h_j\}_{j=1}^P$ therefore lie in
$\mathcal{T}_L = \mathrm{span}\{e^{ikx}:|k|\leq L\}$,
a real vector space of dimension $\leq 2L+1$.
Since $\varphi$ is common to all columns, the column space of $J$
has dimension $\leq 2L+1$ independently of $P$.
\end{proof}

\section{Full Proof of Proposition~\ref{prop:ill_cond}}
\label{app:prop2}

\begin{proof}
\textbf{Step 1: Jacobian entry.}
From Proposition~\ref{prop:rank} and
Lemma~\ref{lemma:serial_coupling}, substituting
$g_j = h_j e^{i\varphi}$:
\begin{equation}
  J_{\omega,j}
  = \int_0^{2\pi}\alpha(x)\,\mathrm{Re}(h_j)\,e^{-i\omega x}\,dx
  + \int_0^{2\pi}\beta(x)\,\mathrm{Im}(h_j)\,e^{-i\omega x}\,dx,
  \label{eq:J_AB}
\end{equation}
where $\alpha(x) := 2(a\cos\varphi + b\sin\varphi)$ and
$\beta(x) := 2(b\cos\varphi - a\sin\varphi)$ are independent
of $j$, with $a, b := \mathrm{Re}(\mu), \mathrm{Im}(\mu)$
from Proposition~\ref{prop:rank}.

\textbf{Step 2: Degree bound on $h_j$.}
Each encoding gate $S_l(x) = e^{ix\sigma/2}$ contributes one
factor of $e^{\pm ix/2}$, so $h_j \in \mathcal{T}_L$ with
$h_j(x) = \sum_{|k|\leq j}\hat h_j^{(k)}e^{ikx}$
and $j\leq L$~\citep{schuld2021effect}.
All $P$ columns of $J$ therefore lie in $\mathcal{T}_L$.

\textbf{Step 3: Rank bound and tightness (i).}
The bound $\mathrm{rank}(J) \leq 2L+1$ follows from
$J \in \mathbb{R}^{(2L+1)\times P}$.
Tightness holds because the map $h_j \mapsto J_{\cdot,j}$
is a Fourier projection into $\mathcal{T}_L$, and for generic
$\boldsymbol{\theta}$ the $2L+1$ rows of $J$ corresponding
to distinct frequencies $\omega \in \Omega$ are linearly
independent --- established by the real-analyticity argument
applied to $\det(JJ^\top)$, which is not identically zero
since the circuit is expressive~\citep{schuld2021effect}.
Hence $\mathrm{rank}(J) = 2L+1$ generically, proving~(i).

\textbf{Step 4: Growth of the null space (ii).}
For $P > 2L+1$, the rank-nullity theorem applied to Step~3 gives:
\begin{equation}
  \dim(\ker J) = P - \mathrm{rank}(J) \geq P - (2L+1),
\end{equation}
which grows without bound as $P \to \infty$ at fixed $L$.
Every parameter $\boldsymbol{\theta}_j \in \ker J$ satisfies
$J_{\omega,j} = 0$ for all $\omega \in \Omega$, meaning it
produces no change in any Fourier coefficient and receives
identically zero gradient signal.
The fraction of such parameters,
\begin{equation}
  \frac{\dim(\ker J)}{P} \geq \frac{P - (2L+1)}{P} \to 1
  \quad \text{as } P \to \infty,
\end{equation}
grows monotonically toward 1, so the overwhelming majority of
parameters become structurally decoupled from the loss as $P$
increases at fixed $L$.
This is independent of the condition number of the exploitable
subspace $\mathrm{im}(J^\top)$, which has dimension at most
$2L+1$ throughout.
\end{proof}

\begin{remark}[Encoding-dependence vs.\ architecture-dependence]
\label{rem:encoding_vs_arch}
Spectral bias is \emph{encoding-dependent}: redundancy numbers
$\binom{2E}{E-\omega}$ crowd out high-frequency gradients under
unary encoding but are reduced under richer schemes
(e.g.\ Chebyshev or non-uniform encodings).
Structural gradient starvation is \emph{architecture-dependent}:
the $\mathbb{CP}^1$ geometry bounds per-input gradient rank at 2
regardless of encoding, and phase-locking caps the Jacobian rank
at $2L_{\mathrm{eff}}+1$ where $L_{\mathrm{eff}}$ scales with
encoding richness.
Richer encodings therefore shift \emph{when} structural gradient
starvation sets in, but not \emph{whether} it does.
\end{remark}

\section{Full Proof of Corollary~\ref{cor:parallel}}
\label{app:cor}

\begin{proof}
\textbf{Step 1: Structure of $J^{(\mathrm{par})}$ and the
phase-locking argument.}
For a product ansatz $U = \bigotimes_n U^{(n)}$ with a tensor
product observable, the global Fourier coefficients are
convolutions of local coefficients:
\begin{equation}
  c_\omega = \sum_{\omega_1+\cdots+\omega_N=\omega}
  \prod_{n=1}^N c^{(n)}_{\omega_n}.
\end{equation}
Differentiating with respect to $\theta^{(m)}_j$ yields
\begin{equation}
  \frac{\partial c_\omega}{\partial \theta^{(m)}_j}
  = \sum_{\omega_1+\cdots+\omega_N=\omega}
  \frac{\partial c^{(m)}_{\omega_m}}{\partial \theta^{(m)}_j}
  \cdot \prod_{n \neq m} c^{(n)}_{\omega_n},
\end{equation}
which depends on the coefficients of all qubits $n \neq m$
via $\prod_{n\neq m} c^{(n)}_{\omega_n}$, so
$J^{(\mathrm{par})}$ is generically dense.

The parallel advantage is instead established via the
phase-locking argument of Lemma~\ref{lemma:serial_coupling}.
For $N=1$, the phase $\varphi(x,\boldsymbol{\theta})$ in
$g_j = h_j e^{i\varphi}$ is determined by the unique state
$|\psi\rangle \in \mathbb{CP}^1$ and is identical for all
$P$ parameters, confining
$\mathrm{rank}(J^{(\mathrm{ser})}) \leq 2L+1$ regardless of $P$.
For $N > 1$, each qubit $n$ has its own downstream operator
$D_j^{(n)}$ and local reduced state $|\psi^{(n)}\rangle$,
so the local phase $\varphi^{(n)}$ is qubit-dependent.
Parameters on qubit $m$ are modulated by $e^{i\varphi^{(m)}}$
while parameters on qubit $n \neq m$ are modulated by
$e^{i\varphi^{(n)}}$, with $\varphi^{(m)} \neq \varphi^{(n)}$
generically.
The global coherence condition of
Lemma~\ref{lemma:serial_coupling} is therefore broken: columns
of $J^{(\mathrm{par})}$ corresponding to different qubits are
modulated by independent phases and are not confined to a single
$\mathcal{T}_L$-subspace.

\textbf{Step 2: Onset threshold at fixed $E$.}
The shared frequency spectrum
$\Omega = \{-E, \ldots, E\}$
(Theorem~\ref{thm:spectral_invariance}) implies
$|\Omega| = 2E+1$ rows in $J^{(\mathrm{par})}$ regardless
of $N$.
By rank-nullity:
\begin{equation}
  \dim(\ker J^{(\mathrm{par})}) \geq P - (2E+1) > 0
  \quad \text{for } P > 2E+1,
\end{equation}
so null directions appear at threshold $P \approx 2E+1$,
consistent with Proposition~\ref{prop:ill_cond}.

\textbf{Step 3: Full column rank for $P \leq 2E+1$.}
Since the column space of $J^{(\mathrm{par})}$ spans more than
$2L+1$ dimensions generically --- by absence of phase-locking
--- $\sigma_{\min}(J^{(\mathrm{par})}) > 0$ for $P \leq 2E+1$.
Formally, the map
$\Phi: \mathbb{R}^P \to \mathbb{R}^{2E+1}$
from parameters to real Fourier representatives
(Definition~\ref{def:jacobian}) is real-analytic, and the
circuit is expressive for $P \leq 2E+1$, so
$q(\boldsymbol{\theta}) := \det(J^{(\mathrm{par})\top}
J^{(\mathrm{par})})$ is not identically zero.
By the real-analyticity argument~\citep[Prop.~2.2.7]{bochnak1998real},
$q > 0$ for almost all $\boldsymbol{\theta}$, giving
$\sigma_{\min}(J^{(\mathrm{par})}) > 0$ generically, so no
parameter lies in $\ker J^{(\mathrm{par})}$.
Checkable sufficient conditions on the observable and ansatz
under which this holds are given in
Remark~\ref{rem:sigma_min_conditions} below.

\textbf{Step 4: Null space growth for $P > 2E+1$.}
For $P > 2E+1$, $\dim(\ker J^{(\mathrm{par})}) > 0$
by Step~2, so $\sigma_{\min}(J^{(\mathrm{par})}) = 0$
and parameters begin accumulating in $\ker J^{(\mathrm{par})}$.
By contrast, for serial circuits
$\dim(\ker J^{(\mathrm{ser})})/P \to 1$ across the same range,
since phase-locking confines all columns to $\mathcal{T}_L$
regardless of $P$.
\end{proof}

\begin{remark}[Checkable conditions for 
  $\sigma_{\min}(J^{(\mathrm{par})}) > 0$]
\label{rem:sigma_min_conditions}
The real-analyticity argument (Appendix~\ref{app:cor}, Step~3)
establishes $\sigma_{\min}(J^{(\mathrm{par})}) > 0$ generically
under the following checkable sufficient conditions:
\begin{enumerate}
  \item[(C1)] \textbf{Parameter count}: $P \leq 2E+1$, so the
    map $\Phi:\mathbb{R}^P \to \mathbb{R}^{2E+1}$ goes from a
    lower- to higher-dimensional space.
  \item[(C2)] \textbf{Non-trivial observable}: $M$ has at least
    two distinct eigenvalues (e.g.\ $M = Z_{N-1}$ as used
    throughout), so $\langle 0 | U^\dagger M U | 0 \rangle$
    is non-constant in $\boldsymbol{\theta}$.
  \item[(C3)] \textbf{Circuit expressivity}: the ansatz
    generates a sufficiently rich Lie algebra.
    For the Rot-CNOT ladder used in our experiments,
    the generators span $\mathfrak{su}(2^N)$
    (universal gate set~\citep{brylinski2002universal}),
    so $\Phi$ is non-constant on every open set and
    $\det(J^{(\mathrm{par})\top}J^{(\mathrm{par})})$
    is not identically zero.
\end{enumerate}
Under (C1)--(C3), the zero set of
$q(\boldsymbol{\theta}) = \det(J^{(\mathrm{par})\top}J^{(\mathrm{par})})$
has Lebesgue measure zero by real-analyticity
\citep[Prop.~2.2.7]{bochnak1998real}, so
$\sigma_{\min}(J^{(\mathrm{par})}) > 0$ for almost all
$\boldsymbol{\theta} \in \mathbb{R}^P$.
Degenerate exceptions (where $\sigma_{\min} = 0$ despite
$P \leq 2E+1$) include: $M \propto I$ (trivial output),
circuits with a continuous symmetry fixing all Fourier
coefficients, or initializations in a measure-zero
symmetry-protected submanifold.
For the Rot-CNOT ladder with $M = Z_{N-1}$, none of
these apply generically.
\end{remark}

\section{Serial-Parallel Separation Theorem}
\label{app:separation}

\begin{theorem}[Serial-parallel conditioning separation]
  \label{thm:separation}
  Fix encoding budget $E = NL$ and equal parameter distribution
  $P_n = P/N$ for the parallel case.
  \begin{enumerate}
    \item[(i)] \textbf{Serial} ($N=1$, $L=E$):
      $\mathrm{rank}(J^{(\mathrm{ser})})\leq 2L+1$
      independently of $P$, and
      $\dim(\ker J^{(\mathrm{ser})})/P \to 1$
      as $P \to \infty$ at fixed $L$: an ever-growing fraction
      of parameters receive identically zero gradient signal.
    \item[(ii)] \textbf{Parallel product ansatz} ($N$ qubits,
      $L=E/N$, no inter-qubit entanglement, equal $P_n=P/N$):
      for $P \leq 2E+1$, $\sigma_{\min}(J^{(\mathrm{par})})>0$
    generically and no parameter lies in $\ker J^{(\mathrm{par})}$;
    for $P > 2E+1$, null directions appear and
    $\dim(\ker J^{(\mathrm{par})})/P \to 1$ as $P \to \infty$.
  \end{enumerate}
  For entangling ansätze, the phase-locking argument extends
  qualitatively; the mechanistic justification is given in
  Remark~\ref{rem:entangling}.
\end{theorem}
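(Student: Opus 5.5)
The plan is to assemble Theorem~\ref{thm:separation} from results already in place, treating the serial and parallel parts separately. The serial part follows from Proposition~\ref{prop:ill_cond} and Lemma~\ref{lemma:serial_coupling}. The parallel part uses the convolution structure of a product ansatz and a real-analyticity (genericity) argument.

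\textbf{Part (i).} With $N=1$ and $L=E$, the coefficient Jacobian has shape $J^{(\mathrm{ser})}\in\mathbb{R}^{(2L+1)\times P}$. Hence $\mathrm{rank}(J^{(\mathrm{ser})})\leq 2L+1$ for every $P$. Lemma~\ref{lemma:serial_coupling} explains why extra parameters cannot add directions beyond this: each $g_j=h_j$ is real and lies in $\mathcal{T}_L$. Rank--nullity then gives $\dim(\ker J^{(\mathrm{ser})})\geq P-(2L+1)$. Dividing by $P$ and letting $P\to\infty$ with $L$ fixed gives a ratio $\geq 1-(2L+1)/P\to 1$. Since the ratio is at most $1$, it converges to $1$. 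This is exactly Proposition~\ref{prop:ill_cond}(ii), so the serial part needs no new work.

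\textbf{Part (ii), two easy pieces.} For $P>2E+1$, rank--nullity with the shared $2E+1$ rows (Theorem~\ref{thm:spectral_invariance}) gives $\dim\ker J^{(\mathrm{par})}\geq P-(2E+1)>0$. The same squeeze as in the serial case then gives $\dim(\ker J^{(\mathrm{par})})/P\to1$. For $P\leq 2E+1$, I would follow Step~3 of Appendix~\ref{app:cor}. The map $\Phi:\boldsymbol\theta\mapsto\tilde{\mathbf c}$ is real-analytic, since the entries are trigonometric polynomials in the rotation angles. So $q(\boldsymbol\theta)=\det(J^\top J)$ is real-analytic on the connected domain $\mathbb{R}^P$. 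If $q$ is not identically zero, its zero set has measure zero, and $\sigma_{\min}>0$ almost everywhere.

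\textbf{Main obstacle: showing $q\not\equiv0$.} It suffices to exhibit a single point $\boldsymbol\theta_0$ where $J$ has full column rank $P$. I would use the convolution formula $c_\omega=\sum_{\omega_1+\dots+\omega_N=\omega}\prod_n c^{(n)}_{\omega_n}$ together with equal allocation $P_n=P/N$.

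\begin{itemize}
\item[(a)] Choose $\boldsymbol\theta_0$ so that each qubit's local serial Jacobian $J^{(n)}\in\mathbb{R}^{(2L+1)\times P_n}$ has full column rank. Proposition~\ref{prop:ill_cond}(i) gives this generically when $P_n\leq 2L+1$.
\item[(b)] The parallel Jacobian columns for qubit $m$ are $J^{(m)}$ columns convolved with the fixed sequence $\prod_{n\neq m}c^{(n)}$. This sequence is the coefficient vector of a trigonometric polynomial of degree $(N-1)L$.
\item[(c)] I then need the sum of these $N$ images to be direct in the $(2E+1)$-dimensional space. The cleanest approach is to specialize the other qubits so that their functions $f^{(n)}$ are nonvanishing, for instance close to constants plus a small top-frequency term. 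Multiplication by a fixed nonvanishing trigonometric polynomial is then injective. Independence across different $m$ is the delicate part. I would try a perturbative or degree-counting argument: place qubit $m$'s contributions at distinct leading frequency blocks by choosing near-diagonal local states, so that the combined matrix becomes block-triangular with full-rank diagonal blocks.
\end{itemize}

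If this explicit construction becomes unwieldy, the fallback is to verify full rank at one concrete point for small $N$. That is a rigorous certificate for $q\not\equiv0$ in those instances. Analyticity then extends the conclusion to generic $\boldsymbol\theta$.
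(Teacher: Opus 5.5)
Your skeleton is the paper's own. Part (i) is Proposition~\ref{prop:ill_cond} (row count plus rank--nullity). The $P>2E+1$ half of (ii) is rank--nullity on the shared $2E+1$ rows. The $P\le 2E+1$ half is the real-analyticity argument of Appendix~\ref{app:cor}, Step~3, which reduces everything to $q\not\equiv 0$. Those reductions are sound. In (i) the row count alone gives the ceiling, so you should not lean on Lemma~\ref{lemma:serial_coupling}: its conclusion $g_j\in\mathbb{R}$ fails in general. The ratio $g_j/g_k$ is gauge-invariant, yet for $|\psi\rangle=R_x(\theta_2)R_y(\theta_1)|0\rangle$ at $\boldsymbol\theta=0$ (an $L=1$ circuit evaluated at $x=0$) it is purely imaginary. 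The paper settles $q\not\equiv 0$ only by asserting that the circuit is expressive. You rightly identify this as the crux, but (a)--(c) leave it open, and in the stated generality it is false.

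\emph{The missing idea is the rescaling symmetry of the product.} Your convolution formula presupposes $M=\bigotimes_n M^{(n)}$ with every $M^{(n)}\neq I$. With the paper's $M=Z_{N-1}$ and a product ansatz, all columns of qubits $0,\dots,N-2$ vanish identically. Under that assumption $f=\prod_n f^{(n)}$, and the differential of $(f^{(1)},\dots,f^{(N)})\mapsto\prod_n f^{(n)}$ annihilates every relative rescaling $(a_1f^{(1)},\dots,a_Nf^{(N)})$ with $\sum_n a_n=0$. Hence $J^{(\mathrm{par})}$ is singular at every point where two distinct qubits satisfy $f^{(m)}\in\mathrm{im}\,J^{(m)}$, no matter how $\boldsymbol\theta_0$ is chosen. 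Conversely, suppose the $f^{(n)}$ are pairwise coprime of exact degree $L$. Reducing $\sum_m t_m\prod_{n\neq m}f^{(n)}=0$ modulo $f^{(k)}$ forces $t_k=a_kf^{(k)}$. So full column rank holds iff each $J^{(n)}$ is injective and at most one $f^{(n)}$ lies in its own local tangent space. The statement omits this hypothesis on the parametrization, your (a)--(c) never check it, and it fails in the simplest case.

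Take $N=2$, $L=1$, each qubit prepared as $R_z(\alpha_n)R_y(\beta_n)|0\rangle$, $R_X(x)$ encoding, and observable $Z\otimes Z$. Then $f^{(n)}=r^{(n)}_z\cos x\pm r^{(n)}_y\sin x$, with $(r^{(n)}_z,r^{(n)}_y)$ filling an open disk. So $\mathrm{im}\,J^{(n)}=\mathrm{span}\{\cos x,\sin x\}\ni f^{(n)}$, and $J^{(\mathrm{par})}\in\mathbb{R}^{5\times 4}$ has rank at most $3$ everywhere, although $P=4\le 2E+1$.

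Step (a) has a related problem. Proposition~\ref{prop:ill_cond}(i) is a full-row-rank claim presupposing $P\ge 2L+1$, not a full-column-rank claim for $P_n\le 2L+1$. Full column rank of $J^{(n)}$ depends on the parametrization. It fails identically for the experiments' Rot blocks, whose first $R_z$ acts on $|0\rangle$ as a phase; there $P=3N(L+1)>2E+1$ anyway.

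Finally, near-constant $f^{(n)}$ push in the wrong direction. At constant $f^{(n)}$ all $N$ column blocks collapse into $\mathcal{T}_L$, so the leading-order matrix is singular once $P>2L+1$. A numerical certificate for small $N$ covers only the instances checked, and in examples like the one above no such certificate exists.
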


\begin{proof}
Claim~(i) is Proposition~\ref{prop:ill_cond}.
Claim~(ii), $P \leq 2E+1$: by the absence of phase-locking
established in Appendix~\ref{app:cor} (Step~1) and the
real-analyticity argument of Step~3,
$\sigma_{\min}(J^{(\mathrm{par})}) > 0$ generically,
so no parameter lies in $\ker J^{(\mathrm{par})}$.
Claim~(ii), $P > 2E+1$: by rank-nullity applied to the
shared spectrum $|\Omega| = 2E+1$,
$\dim(\ker J^{(\mathrm{par})}) \geq P - (2E+1) > 0$,
so $\sigma_{\min}(J^{(\mathrm{par})}) = 0$ and null
directions appear at the same threshold as the serial case.
See Appendix~\ref{app:cor} for details.
\end{proof}

\clearpage
\section{Interpolation Threshold Table}
\label{app:interp_table}

\begin{table}[h]
\centering
\caption{
  Interpolation threshold vs.\ $R^2$ saturation in the
  $\mathrm{tbl}$ sweep ($L = L_{\min}$ fixed,
  $n_{\mathrm{train}}=200$).
  $\mathrm{tbl}_{\mathrm{interp}}$: first $\mathrm{tbl}$
  where $P \geq n_{\mathrm{train}}$.
  Gap $= \mathrm{tbl}_{R^2} - \mathrm{tbl}_{\mathrm{interp}}$.
  Negative gaps indicate saturation before the interpolation
  threshold; zero gap indicates coincidence; positive gaps
  indicate that additional parameters beyond the interpolation
  threshold are needed before gradient directions are
  sufficiently well-covered.
}
\label{tab:interp}
\begin{tabular}{ccccccc}
\toprule
$N$ & $\mathrm{deg}$ & $L_{\min}$ & $P(\mathrm{tbl}{=}1)$ &
$\mathrm{tbl}_{\mathrm{interp}}$ &
$\mathrm{tbl}_{R^2 \geq 0.95}$ & Gap \\
\midrule
1 & 12 & 12 &  39 & 6 &  3 & $-3$ \\
2 & 20 & 10 &  66 & 4 &  4 & $\phantom{+}0$ \\
4 & 28 &  7 &  96 & 3 &  6 & $+3$ \\
6 &  6 &  1 &  36 & 6 &  4 & $-2$ \\
\bottomrule
\end{tabular}
\end{table}

The $N=1$ gap of $-3$ is noteworthy: saturation occurs
\emph{before} the interpolation threshold because at
$L_{\min}=12$, the $2L+1=25$ effective Jacobian directions
are already well-covered by $P=117$ parameters at tbl$=3$,
so gradient-based optimization succeeds before the classical
interpolation threshold is reached.
The $N=6$ gap of $-2$ reflects a similar effect: with
$L_{\min}=1$ and a degree-6 target well within the circuit's
accessible spectrum, the $2NL_{\min}+1=13$ Fourier directions
are already well-matched at $P=144$ (tbl$^*=4$), again before
the classical interpolation threshold $P \geq 200$ is reached.
By contrast, the positive gap for $N=4$ ($+3$) indicates that
additional parameters beyond the interpolation threshold are
needed before the optimizer can sufficiently cover the
available gradient directions.

\section{Parameter Efficiency Table}
\label{app:efficiency_table}

Table~\ref{tab:efficiency} quantifies the parameter efficiency of 
the FM route vs.\ the trainable blocks route to reach mean 
$R^2 \geq 0.95$, starting from 
$P_{\mathrm{base}} = (L_{\min}+1) \cdot N \cdot 3$.

\begin{table}[h]
\centering
\caption{
  Parameter efficiency of the FM route vs.\ the trainable
  blocks route to reach mean $R^2 \geq 0.95$, starting from
  $P_{\mathrm{base}} = (L_{\min} +1) \cdot N \cdot 3$.
  $L^*$: smallest $L$ achieving mean $R^2 \geq 0.95$ along
  the FM route (tbl$=1$ fixed).
  tbl$^*$: smallest tbl achieving mean $R^2 \geq 0.95$ along
  the trainable blocks route ($L = L_{\min}$ fixed).
  Ratio $= P_{\mathrm{tbl}} / P_{\mathrm{FM}}$.
}
\label{tab:efficiency}
\begin{threeparttable}
\begin{tabular}{cccccccc}
\toprule
$N$ & deg & $P_{\mathrm{base}}$ &
$L^*$ & $P_{\mathrm{FM}}$ &
tbl$^*$ & $P_{\mathrm{tbl}}$ &
Ratio \\
\midrule
1 &  12 &  39 & 20 &  63 &  3 & 117 & $1.9\times$ \\
2 &  20 &  66 & 19 & 120 &  4 & 264 & $2.2\times$ \\
4 &  28 &  96 & 22 & 276 &  6 & 576 & $2.1\times$ \\
6 &   6 &  36 &  4 &  90 &  4 & 144 & $1.6\times$ \\
\bottomrule
\end{tabular}
\end{threeparttable}
\end{table}

\clearpage
\section{Experimental Details}
\label{app:experiments}

\paragraph{Circuit architecture.}
All experiments use the Rot ansatz with CNOT ladder
entanglement: each trainable block consists of a
$\mathrm{Rot}(\phi, \theta, \omega)$ gate on wire 0
followed by alternating CNOT and Rot gates on wires
$1, \ldots, N-1$.
Data encoding uses unary $\mathrm{RX}(x)$ gates on
all qubits.
The observable is $M = Z_{N-1}$, the Pauli-$Z$ operator
on the final qubit.
All circuits are implemented in PennyLane~\citep{bergholm2018pennylane}
with a JAX backend~\citep{jax2018github}.

\paragraph{Optimizer and initialization.}
All models are trained with Adam~\citep{kingma2014adam}
at learning rate $\eta = 10^{-3}$ for 5,000 steps.
Parameters are initialized uniformly in $[0, 2\pi]$,
matching standard practice for variational quantum
circuits~\citep{schuld2021effect}.
Each experiment is repeated over 10 random seeds per
target function and 5 target functions, giving 50
independent runs per architecture and degree.

\paragraph{Target functions.}
Synthetic targets are random Fourier series of fixed
degree $d$:
\begin{equation}
  f^*(x) = a_0 + \sum_{k=1}^{d}
  \left[ a_k \cos(kx) + b_k \sin(kx) \right],
\end{equation}
with coefficients $a_k, b_k \sim \mathcal{N}(0,1)$
and $f^*$ normalized to unit variance.
Training uses $n_{\mathrm{train}} = 200$ points
drawn uniformly from $[0, 2\pi]$; evaluation uses
$n_{\mathrm{test}} = 100$ held-out points.

\paragraph{Degree selection.}
The target degree for each architecture is chosen as the
smallest degree at which the minimal configuration
($L = L_{\min}$, tbl$=1$) fails to reach $R^2 \geq 0.95$:
deg$=12$ for $N=1$, deg$=20$ for $N=2$, deg$=28$ for $N=4$,
and deg$=6$ for $N=6$.
This places each experiment in a regime where the starting
configuration is genuinely insufficient and adding parameters
along either route is necessary to reach saturation.

\paragraph{Jacobian diagnostics.}
The coefficient matching Jacobian $J \in \mathbb{R}^{(2E+1) \times P}$
is computed by differentiating through a Fourier coefficient extraction
function: for each architecture $(N, L)$, the circuit output is
evaluated on a DFT grid of $n_{\mathrm{diag}} = 2N(L+1)+1$
equally spaced points $x_k = 2\pi k/n_{\mathrm{diag}}$,
$k = 0,\ldots,n_{\mathrm{diag}}-1$, and the real Fourier representatives
$\tilde{\mathbf{c}}(\boldsymbol{\theta})$
(Definition~\ref{def:jacobian}) are extracted via FFT.
The Jacobian $J = \partial\tilde{\mathbf{c}}/\partial\boldsymbol{\theta}$
is then computed using \texttt{jax.jacrev}.
The grid size $n_{\mathrm{diag}} = 2N(L+1)+1$ slightly exceeds
the theoretical spectrum size $2E+1 = 2NL+1$; the $2N$ additional
DFT points lie outside the circuit's accessible spectrum and
produce near-zero Jacobian rows, so the rank is bounded by
$\min(2E+1, P)$ with no artificial truncation.
Singular values are computed via \texttt{jnp.linalg.svd};
hard rank counts all singular values above a threshold of $10^{-10}$.
The Jacobian QFIM
$\hat{\mathcal{F}} = \mathbf{J}_{\mathrm{data}}^\top
\mathbf{J}_{\mathrm{data}} / n$
used as the spectral diagnostic in Section~\ref{sec:op} is computed
analogously at convergence for the FM vs.\ trainable blocks experiments.

\paragraph{Phase-locking experiment (Figure~\ref{fig:phase_locking}).}
To validate Lemma~\ref{lemma:serial_coupling} directly, we compute
the gradient coefficient $g_j(x) = \langle\psi^\perp|\partial_j\psi
\rangle \in \mathbb{C}$ as $x$ sweeps $[0, 2\pi]$ at a fixed random
initialization, for a serial circuit ($N=1$, $L=4$, $P=15$) and a
parallel circuit ($N=2$, $L=2$, $P=18$).
For each parameter $j$, $g_j(x)$ is computed via
\texttt{jax.jacrev} applied to the complex circuit output before
taking the expectation value, normalized by $\max_x |g_j(x)|$,
and plotted as a parametric curve in $\mathbb{C}$.
Phase-locking is quantified as $\mathrm{Im}(g_j/g_k)$ averaged
over $x$ for all pairs $(j,k)$; a value of $0$ indicates
perfect phase-locking (Lemma~\ref{lemma:serial_coupling}),
while nonzero values indicate independent phase trajectories.

\paragraph{Rank ceiling experiment
(Figure~\ref{fig:scaling_separation}).}
To validate Proposition~\ref{prop:ill_cond}(i),
we sweep encoding budget $E \in \{2, 4, 6, 8, 10, 12, 16\}$
across architectures $N \in \{1, 2, 4\}$ at matched
parameter counts $P \approx 3E$.
For each $(E, N)$ pair, 100 random initializations
are evaluated before training.
The coefficient matching Jacobian $J \in \mathbb{R}^{(2E+1)\times P}$
is computed on a DFT grid of $n_{\mathrm{diag}} = 2N(L+1)+1$
equally spaced points (see Jacobian diagnostics paragraph above),
with $\mathrm{rank}(J)$ computed from the resulting singular
values before training.
The serial architecture uses $N=1$, $L=E$; parallel
architectures use $L = E/N$ FM layers per qubit.
For $N=4$, only $E$ divisible by 4 are used to ensure
integer $L$; non-integer rounding would reduce the actual
encoding budget and produce an artifactual gap below the ceiling.

\paragraph{FM vs.\ trainable blocks experiment
(Figure~\ref{fig:FM_vs_tbl}).}
To validate the efficiency claims of Section~\ref{sec:op},
we run two sweeps for each architecture:
the FM route sweeps $L$ from $L_{\min}$ to $L_{\max}$
with $\mathrm{tbl}=1$ fixed, and the trainable blocks route
sweeps $\mathrm{tbl}$ from 1 to $\mathrm{tbl}_{\max}$ with
$L = L_{\min}$ fixed.
The ranges are:
\begin{center}
\begin{tabular}{ccccc}
\toprule
$N$ & $L_{\min}$ & $L_{\max}$ & $\mathrm{tbl}_{\max}$ \\
\midrule
1 & 12 & 89 & 10 \\
2 & 10 & 40 &  4 \\
4 &  7 & 32 &  8 \\
6 &  1 & 21 & 10 \\
\bottomrule
\end{tabular}
\end{center}
$L_{\max}$ is chosen to extend well beyond the saturation
threshold $L^*$; $\mathrm{tbl}_{\max}$ is chosen to include
at least one configuration beyond tbl$^*$.
Both sweeps use $N \in \{1, 6\}$ in the main text
and $N \in \{2, 4\}$ in Appendix~\ref{app:n2_n4_fm_tbl}.
The Jacobian QFIM is evaluated at convergence,
averaged over all runs, to produce the normalized
eigenvalue spectra of Figure~\ref{fig:FM_vs_tbl}.

\paragraph{Real-world validation experiments
(Figures~\ref{fig:realworld_deg12}
and~\ref{fig:realworld_deg24}).}
Both experiments use the Nottingham temperature
dataset~\citep{hipel1994time} with $n_{\mathrm{train}}=200$
and $n_{\mathrm{test}}=40$ points by fixed random split,
time index mapped to $[0,2\pi]$, target scaled to $[-1,1]$.
\emph{Experiment 1} ($E = N \times L_{\min} = 12$):
$L_{\min} \in \{12, 6, 3, 2\}$ for $N \in \{1,2,4,6\}$.
\emph{Experiment 2} ($E = N \times L_{\min} = 24$):
$L_{\min} \in \{24, 12, 6, 4\}$ for $N \in \{1,2,4,6\}$.
For both experiments, FM sweep step sizes are 6, 3, 2, 1
for $N=1,2,4,6$ respectively (approximately constant
parameter increment ${\sim}18$ per step); the tbl sweep
uses tbl$=1,\ldots,20$ with $L=L_{\min}$ fixed.
Optimizer, initialization, and Jacobian diagnostic settings
are identical to the synthetic experiments.

\clearpage
\section{Ruling Out Barren Plateaus}
\label{app:bp}

Barren plateaus~\citep{mcclean2018barren} are characterized by
exponential decay of gradient variance with circuit depth.
Figure~\ref{fig:app_bp} shows that gradient variance is
approximately constant within each architecture as parameter
count increases, with no sign of this decay.
Crucially, the serial architecture $N=1$ has the largest
gradient variance ($\approx 10^{-2}$) yet fails to train
(cf.\ Figure~\ref{fig:intro}), directly ruling out gradient
vanishing as the primary failure mode.

\begin{figure}[h]
  \centering
  \includegraphics[width=\linewidth]{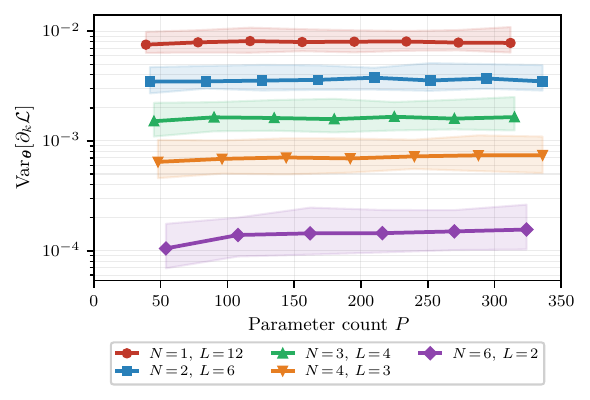}
  \caption{Gradient variance
    $\mathrm{Var}_{\boldsymbol{\theta}}[\partial_k \mathcal{L}]$
    vs.\ parameter count $p$ for degree-12 targets, separated
    by architecture.
    Median over random initializations; shaded bands show IQR.
    Variance is approximately constant within each architecture,
    with no sign of depth-dependent exponential decay.}
  \label{fig:app_bp}
\end{figure}

\clearpage
\section{Jacobian Rank Ceiling: Empirical Validation}
\label{app:rank_saturation}

Figure~\ref{fig:scaling_separation} validates
Proposition~\ref{prop:ill_cond}(i) empirically across
encoding budgets $E \in \{2,4,6,8,10,12,16\}$ and
architectures $N \in \{1,2,4\}$ at matched parameter
counts $P \approx 3E$. All architectures reach the rank
ceiling $2E+1$ at matched $P$, confirming that the ceiling
is tight: at $P > 2E+1$, a growing fraction $(P-(2E+1))/P$
of parameters lie in $\ker J$ regardless of architecture.
For $N=4$, only $E$ divisible by 4 are used to ensure
integer $L = E/N$; non-integer rounding would silently
reduce the actual encoding budget and produce an artifactual
gap below the ceiling.

\begin{figure}[h]
  \centering
  \includegraphics[width=0.5\textwidth]{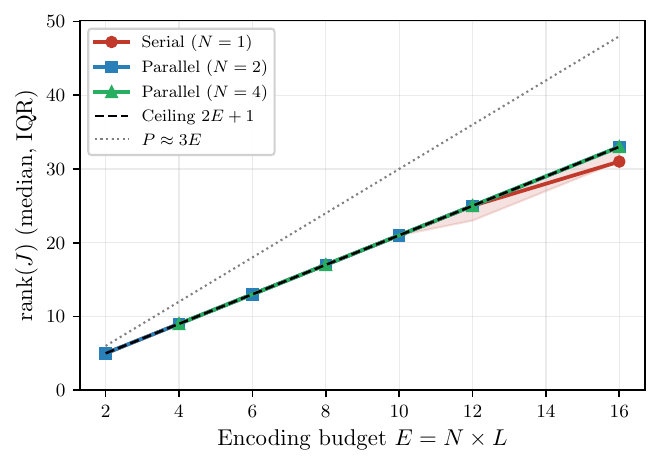}
  \caption{
  \textbf{Jacobian rank saturates at the $2E+1$ ceiling for
  all architectures at matched $P \approx 3E$.}
  Median $\mathrm{rank}(J)$ with IQR across 100 random
  initializations at matched parameter counts $P \approx 3E$
  (dotted gray). All architectures track the theoretical
  ceiling $2E+1$ (dashed black), confirming
  Proposition~\ref{prop:ill_cond}(i): the ceiling is tight
  and at $P > 2E+1$ a fraction $(P-(2E+1))/P$ of parameters
  lie in $\ker J$ regardless of architecture.
  For $N=4$, only $E$ divisible by 4 are shown to ensure
  integer $L = E/N$.
  }
  \label{fig:scaling_separation}
\end{figure}

\clearpage
\section{Bilinear Factorization: Circuit Diagram}
\label{app:bilinear_fig}

Figure~\ref{fig:bilinear} illustrates the structural difference
between serial and parallel coefficient matching equations at
encoding budget $E=2$, complementing the algebraic comparison
of Section~\ref{subsec:concrete}.
The contrast between chain coupling (serial) and bilinear
separation (parallel) is visible directly from the circuit
structure: in the serial case all three parameter matrices
are coupled through the chain of encoding layers, while in
the parallel case $W^{(1)}$ and $W^{(2)}$ enter through
entirely separate factors.

\begin{figure}[h!]
\centering
\tikzset{
  every node/.style={},
  ann/.style={font=\footnotesize},
  lbl/.style={font=\small\bfseries},
}
\begin{tikzpicture}[scale=1.0]

\node[anchor=north west] (serial) at (-1.5, 0) {
  \begin{quantikz}[column sep=0.3em, row sep=0.3em]
    \lstick{$|0\rangle$} &
    \gate[style={fill=blue!12}]{W^{(1)}} &
    \gate[style={fill=gray!15}]{S(x)} &
    \gate[style={fill=blue!12}]{W^{(2)}} &
    \gate[style={fill=gray!15}]{S(x)} &
    \gate[style={fill=blue!12}]{W^{(3)}} &
    \meter{M}
  \end{quantikz}
};

\node[ann, anchor=north] at (2.5, -3.5)
  {$c_{+2}^{\mathrm{ser}} = 
  -2\,\alpha_1\,\alpha_2^2\,\bar\beta_1\,\beta_3\bar\alpha_3$};

\node[anchor=north west] (parallel) at (6.5, 0) {
  \begin{quantikz}[column sep=0.3em, row sep=0.5em]
    \lstick{$|0\rangle$} &
    \gate[style={fill=blue!12}]{W^{(1)}_1} &
    \gate[style={fill=gray!15}]{S(x)} &
    \gate[style={fill=blue!12}]{W^{(2)}_1} &
    \meter{M} \\
    \lstick{$|0\rangle$} &
    \gate[style={fill=blue!12}]{W^{(1)}_2} &
    \gate[style={fill=gray!15}]{S(x)} &
    \gate[style={fill=blue!12}]{W^{(2)}_2} &
    \qw
  \end{quantikz}
};

\draw[dashed, gray!50, thick] (6.2, 0.35) -- (6.2, -2.5);

\node[ann, anchor=south] at (2.5, 0.15) {\textbf{Chain coupling}};
\node[lbl, anchor=south] at (8.8, 0.15) {\textbf{Bilinear separation}};

\draw[decorate, decoration={brace, amplitude=4pt, mirror}, thick]
  (-1.0, -2.5) -- (6.0, -2.5);
\node[ann, anchor=north] at (2.5, -2.75)
  {degree $2L{+}2{=}6$, all matrices coupled};

\draw[decorate, decoration={brace, amplitude=4pt, mirror}, thick]
  (7.25, -2.5) -- (8.5, -2.5);
\node[ann, anchor=north] at (8.0, -2.75) {$W^{(1)}$};

\draw[decorate, decoration={brace, amplitude=4pt, mirror}, thick]
  (9.5, -2.5) -- (10.75, -2.5);
\node[ann, anchor=north] at (10.25, -2.75) {$W^{(2)}$};

\node[ann, anchor=north] at (9.0, -3.5)
  {$c_{+2}^{\mathrm{par}} =
    \underbrace{v_1\bar{v}_4}_{W^{(1)}}
    \cdot
    \underbrace{\widetilde{M}_{4,1}}_{W^{(2)}}$};

\end{tikzpicture}
\caption{
  Structural comparison of coefficient matching equations
  at encoding budget $E=2$.
  \textbf{Left (serial, $N=1$, $L=2$):} $c_{+2}^{\mathrm{ser}}$
  couples all three trainable parameter matrices simultaneously
  in a degree-6 polynomial --- every gradient update entangles
  all parameter blocks.
  \textbf{Right (parallel, $N=2$, $L=1$):} $c_{+2}^{\mathrm{par}}$
  factors into a $W^{(1)}$-dependent term and a
  $W^{(2)}$-dependent term with no cross-coupling
  (Proposition~\ref{prop:bilinear}), consistent with the independent phase trajectories
of Corollary~\ref{cor:parallel}.
  Blue boxes: trainable ansatz layers.
  Gray boxes: data-encoding layers $S(x)$.
  CNOT entanglement omitted for clarity; see
  Remark~\ref{rem:entangling}.
}
\label{fig:bilinear}
\end{figure}

\clearpage
\section{Real-World Validation: Nottingham Temperature Dataset}
\label{app:realworld}

We validate the structural predictions of
Sections~\ref{sec:theory} and~\ref{sec:op} on the Nottingham
temperature dataset~\citep{hipel1994time}, a canonical
univariate time series of 240 monthly mean air temperatures
recorded at Nottingham Castle, England, from 1920 to 1939.
We map the time index to $[0, 2\pi]$, scale the target to
$[-1, 1]$, and use $n_{\mathrm{train}} = 200$ and
$n_{\mathrm{test}} = 40$ points by fixed random split.
Two experiments are reported using different encoding budgets
$E = N \times L_{\min}$ to probe different regimes:
$E=12$ (Figure~\ref{fig:realworld_deg12}) places all
architectures in the expressivity-limited regime where the
model spectrum is insufficient at $L_{\min}$;
$E=24$ (Figure~\ref{fig:realworld_deg24}) gives each
architecture sufficient expressivity at $L_{\min}$ and
isolates the parameter efficiency question.

\subsection*{Experiment 1: $E = N \times L_{\min} = 12$
(Expressivity Regime)}

\begin{figure}[h!]
  \centering
  \includegraphics[height=0.88\textheight,keepaspectratio]{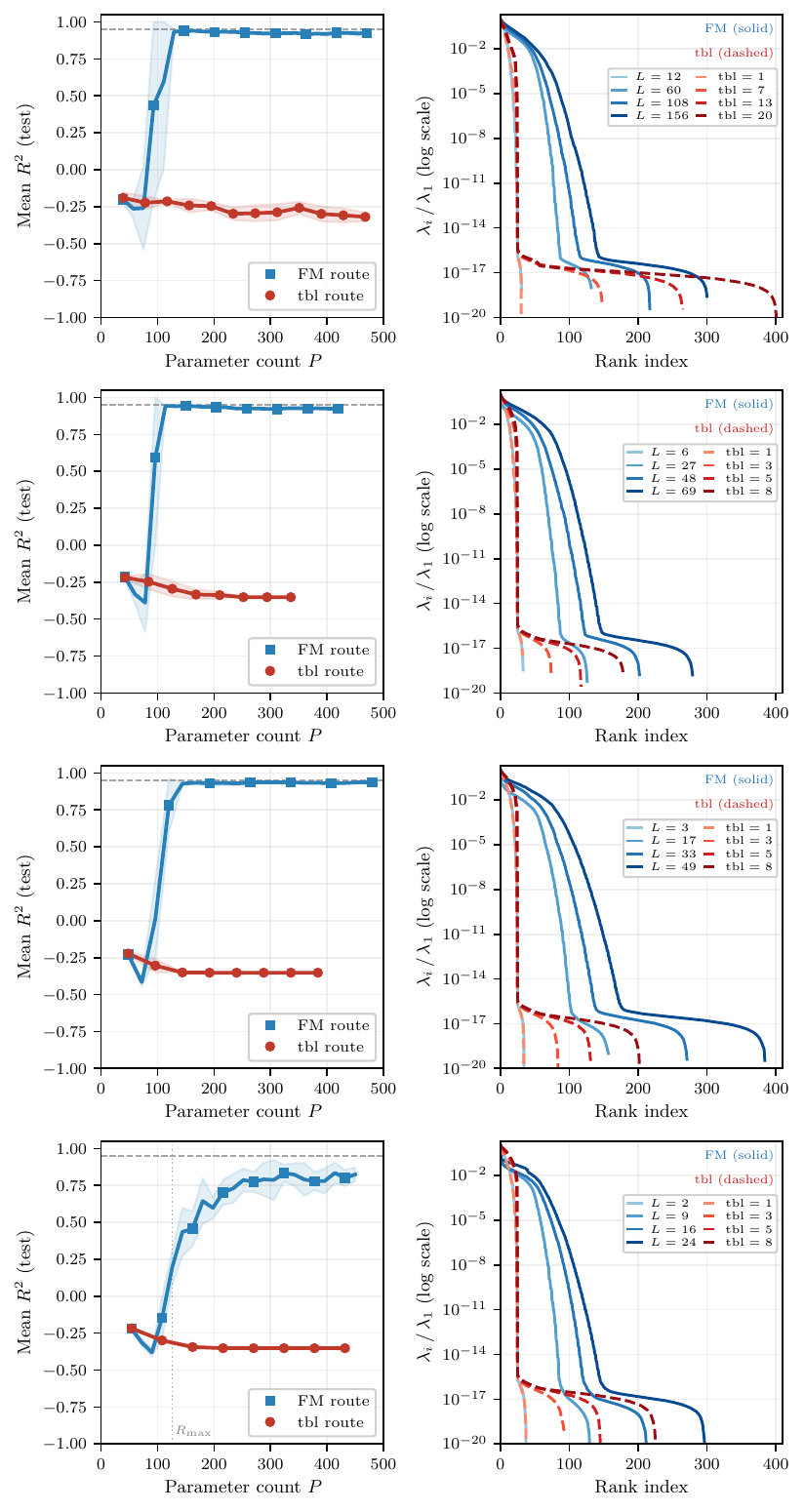}
  \caption{
    \textbf{Real-world validation on the Nottingham temperature
    dataset ($E = N \times L_{\min} = 12$,
    $n_{\mathrm{train}}=200$).}
    Each row shows one architecture ($N \in \{1,2,4,6\}$).
    \textbf{Left panels:} mean $R^2$ (test) $\pm$ std
    vs.\ parameter count $P$; dotted vertical line marks
    $R_{\max}=126$ for $N=6$ only.
    \textbf{Right panels:} normalized Jacobian QFIM spectra
    $\lambda_i/\lambda_1$ at selected $L$ (FM, solid blue)
    and tbl (trainable blocks, dashed red) values.
    The trainable blocks route fails categorically across all
    four architectures ($R^2 < 0$ throughout), while the FM
    route reaches $R^2 \geq 0.95$ for $N \in \{1,2,4\}$
    and approaches but does not fully reach saturation for
    $N=6$.
    The spectral knee shifts rightward along the FM route and
    remains frozen along the tbl route in all four cases,
    consistent with Figure~\ref{fig:FM_vs_tbl}.
  }
  \label{fig:realworld_deg12}
\end{figure}

With $E = N \times L_{\min} = 12$, the encoding budgets are
$L_{\min} \in \{12, 6, 3, 2\}$ for $N \in \{1,2,4,6\}$
respectively.
Unlike the synthetic experiments, the Nottingham spectrum
is not perfectly bandlimited: significant spectral content
lies beyond $L_{\min}$ coverage, so FM layers provide a
dual benefit --- expanding Jacobian rank coverage \emph{and}
expanding expressivity --- while the trainable blocks route
can only improve conditioning within the fixed rank ceiling
without expanding the accessible function class.

The trainable blocks route is stuck at $R^2 < 0$ across the
full parameter range for all four architectures.
For $N=2$ and $N=4$, which successfully trained on synthetic
targets via the tbl route, the failure here is not gradient
starvation but an expressivity ceiling: the Nottingham
spectrum extends beyond the function class accessible at
fixed $L_{\min}$, and adding trainable blocks cannot expand
it.
For $N=1$, structural gradient starvation compounds the
expressivity failure.
The FM route succeeds for $N \in \{1,2,4\}$ by raising $L$,
simultaneously expanding the accessible spectrum and the
Jacobian rank ceiling.
For $N=6$, the FM route improves monotonically but plateaus
around $R^2 \approx 0.80$; the structural interpretation
is given in Experiment~2 below.

\clearpage
\subsection*{Experiment 2: $E = N \times L_{\min} = 24$
(Parameter Efficiency Regime)}

\begin{figure}[h!]
  \centering
  \includegraphics[height=0.88\textheight,keepaspectratio]{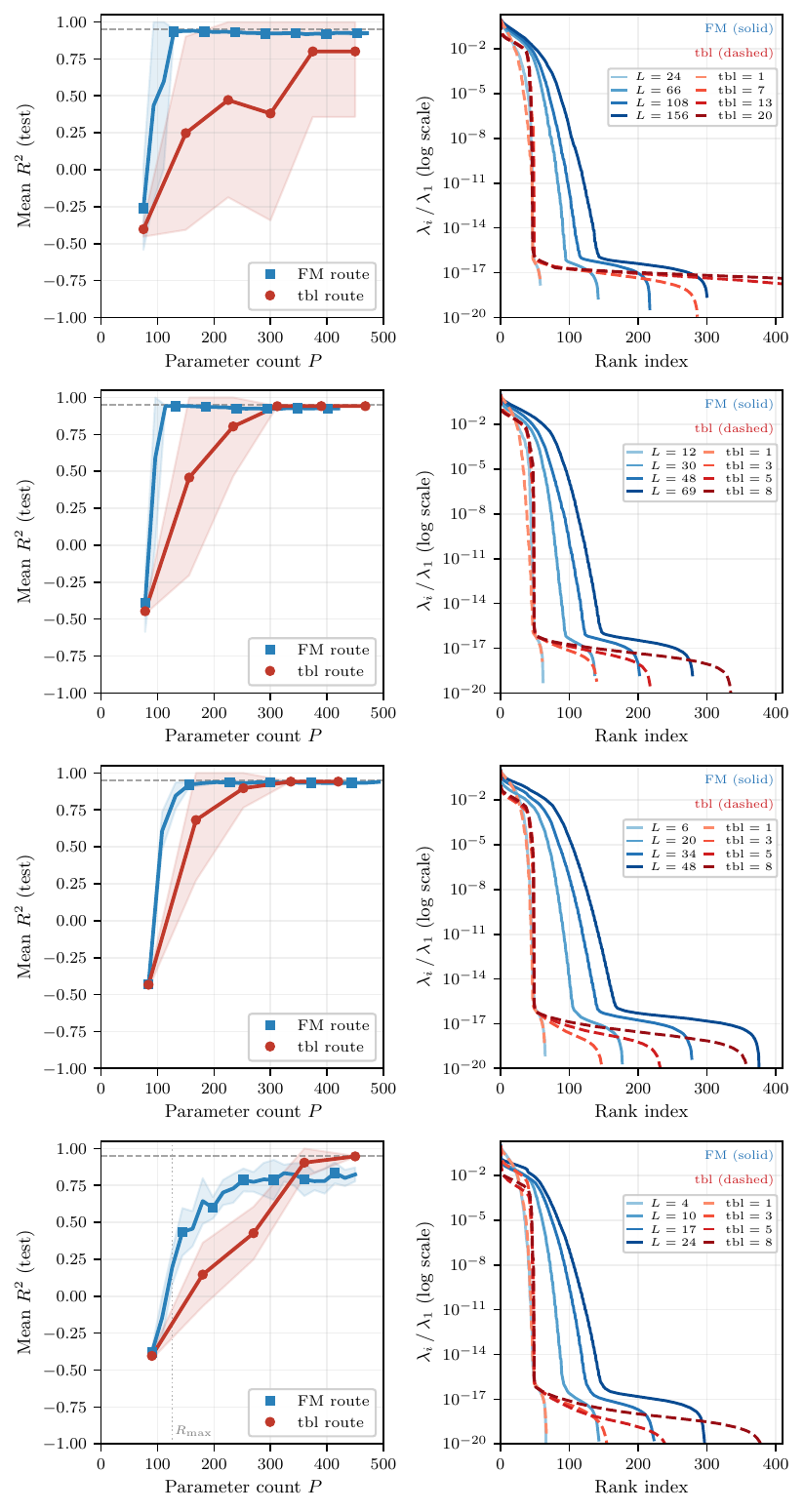}
  \caption{
    \textbf{Real-world validation on the Nottingham temperature
    dataset ($E = N \times L_{\min} = 24$,
    $n_{\mathrm{train}}=200$).}
    Layout identical to Figure~\ref{fig:realworld_deg12}.
    $L_{\min} \in \{24, 12, 6, 4\}$ for $N \in \{1,2,4,6\}$
    respectively; $R_{\max}=126$ line shown for $N=6$ only.
    For $N \in \{1,2,4\}$, the FM route reaches $R^2 \geq 0.94$
    with $1.9$--$2.7\times$ fewer parameters than the trainable
    blocks route, consistent with the synthetic results of
    Table~\ref{tab:efficiency}.
    For $N=6$, the result inverts: the tbl route reaches
    $R^2 \geq 0.95$ at tbl$=5$ ($P=450$) while the FM route
    plateaus around $R^2 \approx 0.83$, consistent with the
    circuit entering the Larocca underparameterization regime
    $P \approx R_{\max}=126$~\citep{larocca2023theory} early
    in the FM sweep.
  }
  \label{fig:realworld_deg24}
\end{figure}

With $E = N \times L_{\min} = 24$, the encoding budgets are
$L_{\min} \in \{24, 12, 6, 4\}$ for $N \in \{1,2,4,6\}$
respectively, giving each architecture sufficient
expressivity at $L_{\min}$ to represent the target spectrum.
This isolates the parameter efficiency question from
expressivity effects.

\paragraph{FM efficiency for $N \in \{1,2,4\}$.}
For $N=1$, the FM route reaches $R^2 \approx 0.94$ at
$L=43$ ($P=129$) while the tbl route reaches comparable
performance at tbl$=7$ ($P=525$), a $4.1\times$ efficiency
ratio.
For $N=2$, the FM route reaches $R^2 \geq 0.94$ at $L=19$
($P=114$) vs.\ tbl$=4$ ($P=312$), a $2.7\times$ ratio.
For $N=4$, FM reaches $R^2 \geq 0.93$ at $L=15$ ($P=180$)
vs.\ tbl$=4$ ($P=336$), a $1.9\times$ ratio.
These efficiency ratios are consistent with the synthetic
results of Table~\ref{tab:efficiency}, confirming that the
FM parameter efficiency advantage generalizes to real-world
data when expressivity is not the binding constraint.

\paragraph{Inversion for $N=6$ and the Larocca boundary.}
For $N=6$, the result inverts: the tbl route reaches
$R^2 \geq 0.95$ at tbl$=5$ ($P=450$) while the FM route
plateaus around $R^2 \approx 0.83$ despite each FM layer
expanding the accessible spectrum by $N=6$ encoding units.
The right panel for $N=6$ shows the spectral knee growing
less with each FM layer than for $N \in \{1,2,4\}$, and
eigenvalue mass concentrated at lower rank indices.
This is consistent with the $N=6$ circuit entering the
Larocca underparameterization regime at $P \approx R_{\max}
= 126$~\citep{larocca2023theory} early in the FM sweep:
at $L_{\min}=4$, $P_{\mathrm{base}} = 90$, and the FM
route crosses $R_{\max}$ at $L=5$ ($P=108$).
Beyond this point, adding FM layers expands the model
spectrum and requires control of additional Fourier
coefficients, increasing the parameter demand faster than
the added parameters satisfy it.
The tbl route avoids this by keeping $L=L_{\min}=4$ fixed
and reaching saturation through the classical interpolation
mechanism, unaffected by the Larocca condition.
Whether this inversion is a general property of the
Larocca-limited regime or specific to the $N=6$ architecture
and Nottingham target is left for future investigation.

\clearpage
\section{Spectral Knee Position: Quantitative Diagnostic}
\label{app:spectral_knee}

Figure~\ref{fig:spectral_knee} quantifies the spectral knee
position --- the number of eigenvalues satisfying
$\lambda_i/\lambda_1 > 10^{-6}$ --- as a function of parameter
count $P$ along both routes for $N=1$ and $N=6$.

For $N=1$ (left panel), the FM route shifts the knee
monotonically from rank $25$ at $L=L_{\min}=12$ to
rank $52$ at $L=89$: each extra FM layer expands the
ceiling $2L+1$ and makes genuinely new Fourier directions
accessible.
The trainable blocks route knee is frozen at rank $24$--$25$, 
matching the theoretical ceiling of $25$, with no monotonic growth ---
confirming that no new Fourier directions become accessible
as trainable blocks are added.

For $N=6$ (right panel), the FM route shifts the spectral knee
monotonically from rank $13$ at $L=L_{\min}=1$ to rank $79$
at $L=21$, and both routes reach $R^2 \geq 0.95$.
The FM route reaches saturation at $L^*=4$ ($P=90$) while the
tbl route reaches saturation at tbl$^*=4$ ($P=144$), giving a
$1.6\times$ efficiency ratio.
The tbl knee is frozen at $12$--$13$, 
matching the theoretical rank ceiling of $2NL_{min}+1=13$
regardless of parameter count, consistent with the
frozen-ceiling pattern seen across all architectures:
the tbl route improves $R^2$ via the classical interpolation
mechanism rather than by accessing new Fourier directions.

\begin{figure}[h!]
  \centering
  \includegraphics[width=\linewidth]{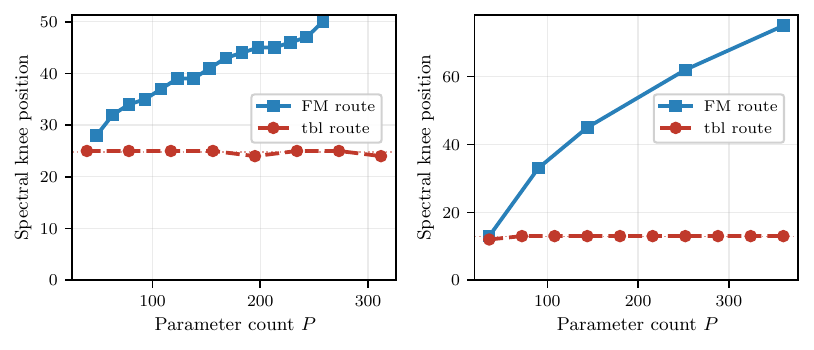}
  \caption{
    \textbf{Spectral knee position vs.\ parameter count $P$
    along the FM route (blue) and trainable blocks route (red).}
    The knee is defined as the number of normalized eigenvalues
    $\lambda_i/\lambda_1$ exceeding $10^{-6}$, a
    threshold-free measure stable across several decades.
    \textbf{Left ($N=1$):} the FM knee grows monotonically
    from rank $25$ at $L_{\min}=12$ to rank $52$ at $L=89$,
    while the trainable blocks knee is frozen at rank
    $24$--$25$, matching the theoretical rank ceiling
    $2L_{\min}+1=25$, regardless of parameter count ---
    confirming the frozen-ceiling claim of
    Section~\ref{subsec:fm_route}.
    \textbf{Right ($N=6$):} the FM knee grows from rank $13$
    to $79$ while the tbl knee is frozen at $12$--$13$,
    matching the theoretical rank ceiling
    $2NL_{\min}+1=13$; both routes reach $R^2 \geq 0.95$.
  }
  \label{fig:spectral_knee}
\end{figure}

\clearpage
\section{FM vs.\ Trainable Blocks: Full Results for
\texorpdfstring{$N=2$}{N=2} and \texorpdfstring{$N=4$}{N=4}}
\label{app:n2_n4_fm_tbl}

Figure~\ref{fig:appendix_FM_tbl_N2_N4} shows the FM and classical
routes for $N=2$, deg$=20$ and $N=4$, deg$=28$ --- the two
architectures in the well-conditioned regime where both routes
reach mean $R^2 \geq 0.95$ within the tested range.
These complement the main text results for $N=1$ and $N=6$
(Figure~\ref{fig:FM_vs_tbl}), which illustrate contrasting
efficiency regimes.

For both $N=2$ and $N=4$, the qualitative picture matches the
theoretical prediction: the FM route reaches saturation with $2.2\times$ fewer
parameters than the trainable blocks route for $N=2$
and $2.1\times$ for $N=4$ (Table~\ref{tab:efficiency}),
and the normalized QFIM eigenvalue
spectra show the spectral knee shifting rightward along the FM
route while remaining fixed along the trainable blocks route.
The frozen-knee pattern is particularly clear for $N=2$, where
the tbl curves collapse visibly at the same knee position across
tbl$=1$\ldots$4$ despite a four-fold increase in parameter count.
For $N=4$ the FM curves are compressed near the left edge of the
rank axis, reflecting the larger state space ($2^N = 16$) which
distributes eigenvalue mass across more directions.
The $N=4$ tbl sweep exhibits non-monotonic $R^2$ behavior:
mean $R^2$ dips from $0.621$ at tbl$=1$ to $0.570$ at tbl$=2$
before recovering monotonically to $0.984$ at tbl$=6$.
The cause of this dip was not investigated further; it does
not affect the structural conclusion that the FM route is
$2.1\times$ more parameter-efficient.

\begin{figure}[h]
  \centering
  \includegraphics[width=\linewidth]{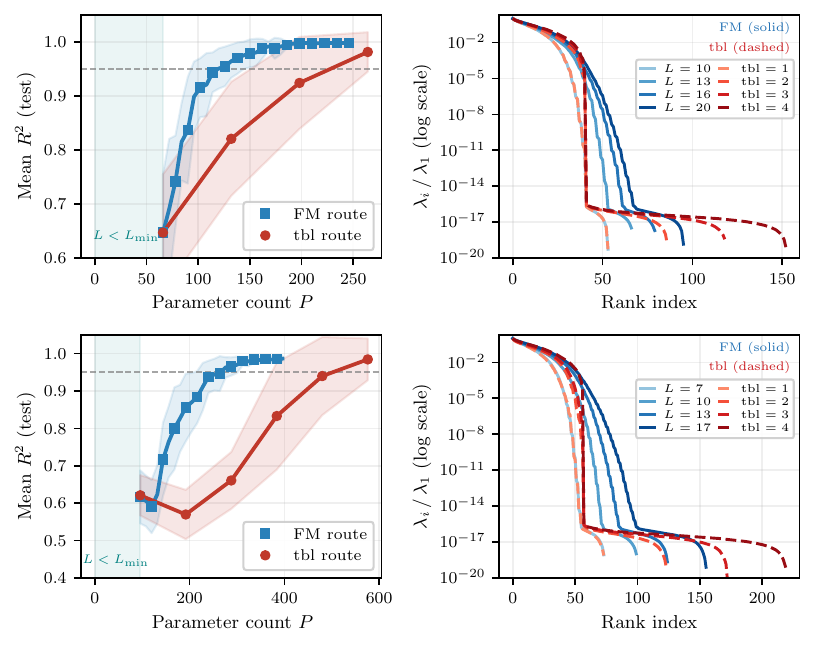}
  \caption{
    \textbf{FM route vs.\ trainable blocks route for $N=2$
    (top row) and $N=4$ (bottom row).}
    \textbf{Left panels:} mean $R^2$ (test) $\pm$ std
    vs.\ parameter count $P$.
    Teal shading: expressivity gap ($L < L_{\min}$).
    Dashed gray line: $R^2 = 0.95$ threshold.
    Both routes reach saturation; the FM route requires
    $2.2\times$ fewer parameters for $N=2$ and $2.1\times$
    for $N=4$ (Table~\ref{tab:efficiency}).
    \textbf{Right panels:} normalized Jacobian QFIM
    spectra $\lambda_i/\lambda_1$ at selected $L$
    values (FM, solid blue) and tbl values (trainable 
    blocks, dashed red).
    The spectral knee generally shifts rightward along the FM
    route and remains fixed along the trainable blocks 
    route, consistent with Figure~\ref{fig:FM_vs_tbl} and
    the frozen-knee claim of
    Section~\ref{subsec:fm_route}.
  }
  \label{fig:appendix_FM_tbl_N2_N4}
\end{figure}

\clearpage
\section{Degree Selection for Efficiency Experiments}
\label{app:degree_justification}

The efficiency comparison in Table~\ref{tab:efficiency} uses
a fixed target Fourier degree for each architecture:
deg$=12$ for $N=1$, deg$=20$ for $N=2$, deg$=28$ for $N=4$,
and deg$=6$ for $N=6$.
Figure~\ref{fig:degree_justification} justifies these choices
by showing mean $R^2$ (test) as a function of target degree
for each architecture, evaluated at matched encoding budget
$E = \mathrm{deg}$ with tbl$=1$.

For $N=2$ and $N=4$, the chosen degrees sit at the onset
of reliable degradation: mean $R^2$ remains near 1.0 for
lower degrees and begins to decline consistently beyond the
chosen value, placing the experiment in a regime where the
architecture is genuinely challenged without already failing.
For $N=1$, deg$=12$ falls within the declining regime ---
consistent with the gradient starvation mechanism, which
causes serial circuits to degrade at lower degrees than
parallel architectures of the same encoding budget.
For $N=6$, deg$=6$ is the lowest tested degree; the minimum
configuration ($L_{\min}=1$, tbl$=1$, $P=36$) lies well below
the geometric threshold $R_{\max} \leq 2^{N+1}-2 = 126$
of~\citet{larocca2023theory}, explaining the low $R^2 \approx 0.56$
at the starting point.
Both routes reach $R^2 \geq 0.95$ once $P$ approaches or
exceeds $R_{\max}$: the tbl route crosses the threshold around
tbl$=4$ ($P=144$) and the FM route at $L^*=4$ ($P=90$), where
the accessible spectrum ($2NL+1=49$) remains well within
$R_{\max}=126$.
The dashed vertical lines mark the chosen degree for each
architecture.

\begin{figure}[h!]
  \centering
  \includegraphics[width=0.75\linewidth]{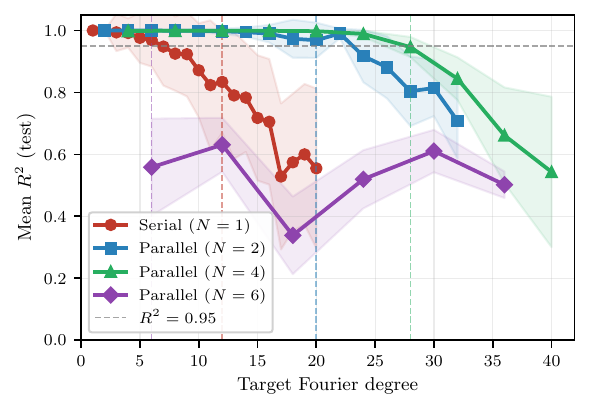}
  \caption{
    Mean $R^2$ (test, $\pm$ std) vs.\ target Fourier degree
    for each architecture, at matched encoding budget
    $E = \mathrm{deg}$ and tbl$=1$.
    Dashed vertical lines mark the degree used in
    Table~\ref{tab:efficiency} for each architecture.
    Dashed gray line: $R^2 = 0.95$ reference.
    For $N=2$ and $N=4$, the chosen degree sits at the
    onset of degradation.
    For $N=1$, deg$=12$ falls within the declining regime,
    consistent with structural gradient starvation.
    For $N=6$, the minimum configuration lies below the geometric
    threshold $R_{\max}=126$~\citep{larocca2023theory}; both routes
    reach $R^2 \geq 0.95$ once $P$ approaches $R_{\max}$.
  }
  \label{fig:degree_justification}
\end{figure}


\end{document}